\newcommand{\free}{\mathrm{free}}
\begin{document}

\begin{frontmatter}
  \title{Local Dependence and Guarding}
   \author{Johan van Benthem}
  \address{ILLC University of Amsterdam, Stanford University, Tsinghua University}
  \author{Balder ten Cate}\footnote{Supported by the European Union’s Horizon 2020 research and innovation programme under grant MSCA-101031081.}
  \address{ILLC University of Amsterdam}
  \author{Raoul Koudijs}
  \address{ILLC University of Amsterdam}

  \begin{abstract}
We study LFD, a base  logic of functional dependence introduced by
Baltag and van Benthem (2021) and  its connections with the guarded 
fragment GF of first-order logic. Like other logics of dependence, the 
semantics of LFD uses teams: sets of permissible variable assignments. 
What sets LFD apart is its ability to express local dependence between 
variables and  local dependence of statements on variables.
 
Known features of LFD include decidability,  explicit  axiomatization,
finite model property, and a bisimulation characterization. Others, including 
the complexity of  satisfiability, remained open so far. More generally, what 
has been lacking is a good understanding of what makes the LFD approach 
to dependence computationally well-behaved, and how it relates to other 
decidable logics. In particular, how do allowing variable dependencies and 
guarding quantifiers compare as logical devices?
 
We provide a new compositional translation from GF into LFD, and conversely, 
we translate LFD into GF in an `almost compositional' manner. Using these two  
translations, we transfer known results about GF to LFD in a uniform manner, yielding, 
e.g., tight complexity bounds for LFD satisfiability, as well as Craig interpolation. 
Conversely, e.g., the finite model property of LFD transfers to GF. Thus, local dependence 
and guarding turn out to be intricately entangled notions. 
  \end{abstract}

  \begin{keyword}
  Logic, Dependence, Guarded Fragment
  \end{keyword}
 \end{frontmatter}

\section{Introduction:  from guarding to dependence}

The Guarded Fragment GF of first-order logic, \cite{ModLangBoundedFrag}, is a well-known formalism for quantifying over tuples of objects that are locally \emph{guarded}  by predicates. Guarded existential quantification has the format $\exists \bar{y}(G(\bar{x}, \bar{y})\, \land\, \varphi(\bar{x}, \bar{y}))$, where $\bar{x}$, $\bar{y}$ are finite sets or tuples of variables, the  atom $G(\bar{x}, \bar{y})$ is the guard, and $\varphi(\bar{x}, \bar{y})$ is  a guarded formula having at most the free variables displayed. For further literature on GF and its extensions, see    \cite{Graedelguards}, \cite{GNbcs}.

Guarding quantifiers in truth conditions for  existing semantics is  a general  device for lowering  complexity of logical systems, \cite{ABBN}. But there are more such devices. In particular, \cite{ModLangBoundedFrag} connects guarding with  `generalized assignment models' for first-order logic [to be called `teams' henceforth in accordance with modern terminology]  where not all maps from variables to objects need to be present. These `gaps' in the full function space can be seen as modeling \emph{dependence}, a major topic in the current logical literature, \cite{Alice}, \cite{Vaa}. With  assignments missing, correlations  arise, as changing the value of one variable $x$ may only be feasible in the given  space by also changing that of another variable $y$. In contrast, in standard models for first-order logic all variables  take their values independently. The  origins of this semantics lie in   `cylindric relativized set algebra' CRS in algebraic logic where standard set algebras are relativized to one fixed relation over the domain, \cite{Nemeti86}. Unlike first-order logic, CRS logic is decidable.

That guarding can encode dependence was proved in \cite{ModLangBoundedFrag}.   First-order formulas in CRS  semantics translate compositionally into guarded formulas over standard first-order models by coding the available tuples of values for  finite tuples of relevant variables $\bar{v}$ as one new guard predicate $G(\bar{v})$. Conversely, \cite{Guardsbounds} gave an effective reduction for the satisfiability problem for GF formulas $\varphi$  to that for first-order formulas over team models, though the translation employed  is not compositional.\footnote{Related results and further general analysis  are found in \cite{marxtaming}.}Thus, dependence models can also encode guarding. The main topic of this paper is a further technical analysis of how far this analogy goes, now based on modern dependence logics rather than CRS.

CRS-style  semantics treats dependence implicitly by  giving up classical FO laws such as Commutativity $\exists x \exists y\, \varphi(x, y) \leftrightarrow \exists y \exists x\, \varphi(x, y)$ which express independence of the variables $x$ and $y$. In a next step,  \cite{Vaa}  introduced explicit syntactic atoms $D_Xy$ expressing functional dependence of  variable $y$ on the set of variables $X$. The  first generation of dependence logics over this richer language was second-order and non-classical, but many simpler fragments have been studied. A systematic use of GF-style guarding to lower the complexity of dependence logics is found in \cite{GrOt}. However, the present paper is concerned with a simpler base system for dependence logic.

A new base logic LFD for a first-order  language with explicit dependence atoms was recently proposed in \cite{LFD}  as a minimal way of reasoning about dependence on a classical base relying on  a modal local semantics. LFD is a dualized CRS logic [in a sense to be defined below] plus dependence atoms, which is decidable, axiomatizable, and allows for natural language extensions, e.g., to function symbols and  independence modalities. Many of these properties involve GF-style methods such as the use of type models and representation theorems. Thus, the question arises whether the connection between guarding and dependence extends to this new setting. But there are  obstacles here. The  first-order translation for LFD lands  in the guarded fragment, except for the  dependence atoms, cf. \cite{localdeps}.
In this paper, we offer the following new results. 

\begin{itemize}
    \item 
We give a compositional translation from GF into LFD without dependence atoms, considerably improving the known SAT reduction. 

\item 
We give an effective SAT reduction from LFD into GF, answering a question left open in the literature. 

\item We show how these results allow for transfer of important properties between LFD and GF. For instance, the finite model property for GF~\cite{Graedelguards}, can be derived from that for LFD, established in~\cite{Raoul}. Conversely, the decidability of LFD follows from that of GF. Moreover, using our translations, we solve the open problem of the  computational complexity of LFD.

\item Beyond  single examples, we also discuss the general transfer of system properties made possible by our translations, and discuss more general consequences for relating decidable fragments of first-order logic.
\end{itemize}

All missing proofs can be found in the appendix.

\section{Preliminaries}\label{sec:prel}


\subsection{The Logic of Functional Dependence LFD}

The Logic of Functional Dependencies (LFD) was introduced in~\cite{LFD}, to which we refer for all definitions and results stated in this section.

\begin{definition}(Syntax of LFD) ~
Fix a finite set of variables $V_{LFD}$, and a 
relational signature (i.e., a set of relation symbols with associated arity)
$\mathbf{S}$. The formulas of LFD are recursively generated by the following grammar.
\[\psi::= P(v_1, \ldots, v_n) \mid D_Vu \mid \psi\wedge\psi \mid \neg\psi \mid \mathbb{E}_V\psi\]
with $v_1, \ldots, v_n,u\in V_{LFD}$,  $V\subseteq V_{LFD}$, and 
$P\in\mathbf{S}$ an $n$-ary relation symbol.
\end{definition}

We read $D_Vu$ as `$u$ locally depends (only) on $V$' or   `$u$ is locally determined by $V$'. 
For notational convenience, we write $D_VU$ for the conjunction $\bigwedge_{u\in U}D_Vu$. We also write $D_{v}u$ as a shorthand for $D_{\{v\}}u$, and $\mathbb{E}_{v}\psi$ for $\mathbb{E}_{\{v\}}\psi$.

The semantics of LFD uses \emph{dependence models}, also known as `generalized assignment models'. These are pairs $\mathbb{M}=(M,A)$ where $M$ is a 
standard model over the relational signature $\mathbf{S}$,
and $A\subseteq dom(M)^{V_{LFD}}$ is a collection of admissible variable assignments, 
also known as a `team'.

\begin{definition}{(Semantics of LFD)}\\
Truth of a formula $\varphi$ in a dependence 
model $\mathbb{M}=(M,A)$ under an assignment $s\in A$ is defined as follows, where $s =_V t$ means that $s(v)=t(v)$ for all $v\in V$:

\vspace{-1ex}

\[\begin{array}{lll}
\mathbb{M},s\models P(v_1,\ldots,v_n) &\text{iff}      & s(v_1,\ldots,v_n)\in P^M\\

\vspace{1mm}

\mathbb{M},s\models \varphi\land\psi &\text{iff} & 
  \mathbb{M},s\models \varphi \text{ and } \mathbb{M},s\models \psi \\
  
  \vspace{1mm}
\mathbb{M},s\models \neg\varphi &\text{iff} & \text{not} \,
  \mathbb{M},s  \models \varphi \\
  
  \vspace{1mm}
\mathbb{M},s\models\mathbb{E}_V\varphi &\textrm{iff}& 
  \text{for some } t\in A \textrm{ , } s=_V t \textrm{ and } \mathbb{M},t\models\varphi\\
  
  \vspace{1mm}
\mathbb{M},s\models D_Vu &\textrm{iff}& \textrm{for all $t\in A$, $s=_V t$ implies $s=_u t$}
\end{array}\]

\end{definition}

LFD formulas come with a notion of \textit{free variable}, being a variable whose value can affect the truth of the formula, in the semantics to follow.

\begin{definition}{(Free Variables)}\\
The free variables of an LFD formula $\psi$ are defined as follows:
\[\begin{array}{ll}
    \free(P(v_1...v_n))       &=\{v_1,...,v_n\}\\
    
    \vspace{1mm}
    \free(\neg\varphi)      &=\free(\varphi)\\
    
        \vspace{1mm}
    \free(\varphi\wedge\psi)&=\free(\varphi)\cup \free(\psi)\\
    
        \vspace{1mm}
    \free(D_Vu) = \free(\mathbb{E}_V\psi)            &=V
\end{array}\]
\end{definition}

  \vspace{-1ex}

It might seem odd that $free(\mathbb{E}_x\varphi)=\{x\}$ while $free(\exists x\varphi)=free(\varphi)\setminus\{x\}$. The difference, however, is a matter of perspective, since the two notions of free variable coincide on their interpretation as ``a variable whose value can affect the truth of the formula''. With the preceding definitions, it is easy to show that LFD, just like FO, satisfies a property called \emph{variable locality}: any two assignments $s, t$ that agree on some set of variables $X$ agree on the truth values of all formulas whose free variables are contained in $X$ (plus all variables that locally depend on $X$ at $s$, $t$). 

The difference with FO free variables arises because the LFD modality is dual to that of FO, and displays variables whose values are kept fixed, rather than those whose values are allowed to vary. More generally, LFD shares a feature with CRS that makes it distinct from FO in its standard semantics. While in first-order formulas quantified variables are arbitrary placeholders, in LFD formulas variables have an individual character given their possibly different roles in the assignment space.\footnote{Accordingly, renaming bound variables in one model is not generally possible in LFD, though one can shift to new variables when  allowing changing the current model~\cite{LFD}.}

Standard first-order models induce special `full' dependence models  $(M,M^V)$ where \textit{all} assignments are available. On such models there are no non-trivial dependencies (atoms $D_Xy$ are always false  unless $y \in X$) and the dependence quantifier $\mathbb{E}$ collapses to the standard existential quantifier $\exists$, recapturing FO. Conversely, dependence models can also be encoded as standard models,  supporting a faithful translation into an expanded version of FO. 

Concretely, there is a translation $tr(\cdot)$ from LFD formulas to FO formulas (over an expanded signature), and a contravariant function $T$ that  bijectively maps models of $tr(\varphi)$
to dependence models satisfying $\varphi$. In a picture:

\vspace{1mm}

\begin{center}
\begin{tabular}{lll}
LFD-formula $\varphi$ & $\xrightarrow{~~~~tr~~~~}$ & first-order formula $tr(\varphi)$ \\[.5em]
dependence model $T(M)$ & $\xleftarrow[(\text{bijective})]{~~~T~~~}$ & first-order model $M$
\end{tabular}
\end{center}

\vspace{1mm}

Since we will see variants of this diagram of contravariant maps  in later sections of this paper, 
we will spell things out in detail, as it will serve as an example for our later translation schemes.

\begin{definition}{(First-Order Translation)}\\
Let $V_{LFD}=\{v_1, \ldots, v_n\}$. The translation function $tr(\cdot)$ maps LFD formulas over $V_{LFD}$ to FO formulas using
the same variables $v_1, \ldots, v_n$ and auxiliary variables $v'_1, \ldots, v'_n$, and using an additional $n$-ary relation symbol $A$.
\begin{align*}
    tr(P(u_1,\ldots,u_n)):=&\;P(u_1,\ldots,u_n) \vspace{1mm}\\
    tr(\varphi\wedge\psi):=&\;tr(\varphi)\wedge tr(\psi)\vspace{1mm}\\
    tr(\neg\varphi):=&\;\neg tr(\varphi)\vspace{1mm}\\  
    tr(D_V u):=&\;\forall \bar{v'}(A(\bar{v'})\land\bigwedge_{v\in V}(v=v'\to u=u')) \\
    tr(\mathbb{E}_V\psi):=&\;\exists\bar{z}(A(\bar{v})\wedge tr(\psi))
\end{align*}
where $\bar{v}=v_1, \ldots, v_n$, $\bar{v'}=v'_1, \ldots, v'_n$, and $\bar{z}$ enumerates $V_{LFD}\setminus V$.
\end{definition}


In this translation $A$ is a new  predicate encoding the admissible assignments on the relevant variables. 

\begin{definition}{(Model Transformation)}\\
Let $V_{LFD}=\{v_1, \ldots, v_n\}$.
For a standard model $M$ over the extended signature with the $A$ relation, $T(M)$ is the 
dependence model obtained by extracting a team $A:=\{s:V\to M\;|\;(s(v_1), \ldots, s(v_n))\in A^M\}$ on $M$ and removing the relation $A^M$ from the model. 
\end{definition}

Note that the map $T$ is bijective. Its inverse $T^{-1}$ maps a dependence model $\mathbb{M}=(M,A)$ to the standard model over the extended signature obtained by interpreting  $A$ as $\{(s(v_1), \ldots, s(v_n)) \mid s\in A\}$.

\begin{theorem} \label{thm:fo-translation-correctness}
For every LFD formula $\psi$, standard model $M$,  assignment $s:\free(\psi)\to M$
such that  $M,s\models A(v_1, \ldots, v_n)$,
we have the equivalence

\vspace{-2ex}

$$T(M), s\models\psi\quad\;\textrm{iff}\quad M,s\models tr(\psi) \land A(v_1, \ldots, v_n)\footnote{The  added conjunct $A(v_1, \ldots, v_n)$ makes sure that $s$ is an admissible assignment in $M$. We could also work this information into the compositional clauses of the translation.}$$

\end{theorem}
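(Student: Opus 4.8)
The plan is to prove the biconditional by induction on the structure of $\psi$, carrying along at every step the standing hypothesis $M,s\models A(v_1,\ldots,v_n)$ --- i.e.\ that $s$, viewed as an element of $dom(M)^{V_{LFD}}$, belongs to the team $A$ extracted by $T$. Note first that, although the statement only asks for $s$ to be defined on $\free(\psi)$, the atom $A(v_1,\ldots,v_n)$ mentions all of $V_{LFD}$; by variable locality (for FO on the right, for LFD on the left) the values of $s$ outside $\free(\psi)$ affect neither side, so one may harmlessly treat $s$ as a total assignment on $V_{LFD}$ throughout.

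For the base cases: if $\psi = P(u_1,\ldots,u_n)$, then $T(M)$ interprets $P$ exactly as $M$ does --- the construction of $T$ only discards the auxiliary relation $A^M$ --- so $T(M),s\models P(\bar u)$ iff $M,s\models P(\bar u)=tr(P(\bar u))$, and the conjunct $A(v_1,\ldots,v_n)$ is supplied by the hypothesis. If $\psi = D_V u$, one unfolds both sides: $T(M),s\models D_V u$ says that every $t\in A$ with $t=_V s$ has $t(u)=s(u)$, and ``$t\in A$'' is by definition of $T$ the condition $(t(v_1),\ldots,t(v_n))\in A^M$; this is precisely what $tr(D_V u)$ asserts of $M,s$, with the universal quantifier ranging over the primed copies $\bar{v'}$ of the variables and the atom $A(\bar{v'})$ picking out the team members. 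Again $A(v_1,\ldots,v_n)$ is free by hypothesis.

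For the inductive cases: conjunction is immediate since $tr$ commutes with $\wedge$ and the hypothesis $M,s\models A(\bar v)$ is shared. Negation is where the added conjunct earns its keep: $T(M),s\models\neg\varphi$ iff $T(M),s\not\models\varphi$, which by the induction hypothesis is iff $M,s\not\models tr(\varphi)\wedge A(\bar v)$; since $M,s\models A(\bar v)$ holds by hypothesis, this reduces to $M,s\not\models tr(\varphi)$, i.e.\ $M,s\models\neg tr(\varphi)=tr(\neg\varphi)$, and reconjoining $A(\bar v)$ gives the claim. For $\psi=\mathbb{E}_V\varphi$: $T(M),s\models\mathbb{E}_V\varphi$ iff there is $t\in A$ with $t=_V s$ and $T(M),t\models\varphi$; such a $t$ is exactly an assignment obtained from $s$ by reassigning the variables $\bar z=V_{LFD}\setminus V$ and satisfying $(t(v_1),\ldots,t(v_n))\in A^M$, hence is admissible, so the induction hypothesis applies to $t$ and yields $T(M),t\models\varphi$ iff $M,t\models tr(\varphi)\wedge A(\bar v)$ iff $M,t\models tr(\varphi)$. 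Matching this against the first-order clause $\exists\bar z\,(A(\bar v)\wedge tr(\varphi))=tr(\mathbb{E}_V\varphi)$ gives the equivalence, and one more conjunction with the hypothesis closes the case.

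I expect the only point requiring real care to be the negation step together with the accompanying admissibility bookkeeping: one must make sure the conjunct $A(v_1,\ldots,v_n)$ is threaded through exactly where needed --- in particular, that the witness produced in the $\mathbb{E}_V$-clause is itself admissible, which is what licenses applying the induction hypothesis to it --- and that the convention on the domain of $s$ is handled uniformly. Everything else is a routine unfolding of the definitions of $tr$, of $T$, and of the two satisfaction relations.
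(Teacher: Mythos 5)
Your proof is correct and follows the standard argument for this theorem: a structural induction on $\psi$ in which the admissibility conjunct $A(v_1,\ldots,v_n)$ is threaded through precisely the negation and $\mathbb{E}_V$ cases where it is needed, with the witness in the $\mathbb{E}_V$ case shown admissible via the conjunct $A(\bar v)$ inside $tr(\mathbb{E}_V\varphi)$. The only point worth flagging is that you (correctly) read the clause $tr(D_Vu)$ as having an implication after the guard $A(\bar{v'})$ rather than the conjunction as literally printed in the paper.
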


Since $T(\cdot)$ is bijective, Theorem~\ref{thm:fo-translation-correctness} implies, in particular, that $tr(\varphi)\land A(v_1, \ldots, v_n)$ and $\varphi$ are equi-satisfiable. 
As consequences of this translation, one can also derive an effective reduction of semantic consequence, Compactness, L\"owenheim-Skolem, recursive enumerability, all of them inherited by LFD  from FO. Other properties require additional work in reproving classical results, much as happens with the standard FO translation for modal logic \cite{BRV}.

\begin{rem}[LFD and CRS] Dependence models are the generalised  semantic structures first introduced for the algebraic version CRS of FO logic, \cite{Nemeti86}, with  $M, s \models \exists x\, \varphi$  if some available  assignment $t = ^{x} s$  in $M$  satisfies $\varphi$. Here the relation $=^{x}$ is dual to the $=_x$ used for LFD, requiring that $s, t$ agree on the values for all variables \emph{except} x. Thus CRS quantifiers, like the first-order ones, display the variables that are allowed to vary, rather than those kept fixed. A polyadic version $\exists \bar{x}\, \varphi$ is defined analogously, which, unlike in standard FO semantics, does not reduce to iterated single existential quantifiers. When the total set $V$ of variables is finite, the existential quantifiers of CRS and the existential modality of LFD are  intertranslatable, e.g., $\exists x\, \varphi$ = $\mathbb{E}_{V\setminus\{x\}} \, \varphi$. Thus, the fragment of LFD without dependence atoms amounts to a version of CRS with `dual quantifiers', more detailed discussion can be found in \cite{LFD}. Thus, some results in what follows also specialize to new connections between GF and CRS. \end{rem}


\subsection{Dependence Bisimulations, and Type Models}
\label{sec:type-models}

We need to introduce two more technical notions related to LFD that will play an important role in our proofs in subsequent sections. 

Given its  similarities with modal logic,  LFD comes with a notion of \textit{dependence-bisimulations}, which leaves the truth values of formulas invariant \cite{localdeps,Raoul}. For each set of variables $V$, dependence model $\mathbb{M}$, and assignment $s$, the \textit{dependence-closure} $D_V^s=\{u\in V_{LFD} \mid \mathbb{M},s\models D_V u\}$ is the  set of variables locally determined by $V$ at $s$. Clearly, $V\subseteq D_V^s$.
We call a set $V$ \textit{dependence-closed} at  assignment $s$ if $D^s_V=V$.
Also, for any two assignments,  $V^{s,t}=\{v\in V_{LFD}\;|\;s=_vt\}$ denotes the set of variables on which $s, t$ agree.

\begin{definition}{(Dependence Bisimulations)}
Let $(M,A),(M',A')$ be dependence models. A relation $Z\subseteq A\times A'$ is a \textit{dependence bisimulation} if for every pair $(s,s')\in Z$, the following conditions are satisfied:
\footnote{Note that dependence bisimulations $Z$ are always \textit{total}, i.e., $dom(Z)=A$ and $cod(Z)=A'.$}
\begin{description}
\item[(atom)]  $s(\bar{u})\in P^M\quad\textrm{iff}\quad s'(\bar{u})\in P^{M'}$ for all  tuples $\bar{u}$ of LFD variables
 
\item[(forth)]  for every $t\in A$, there is a $t'\in A'$ such that (i) $V^{s,t}\subseteq V^{s',t'}$,\\
  (ii) $(t,t')\in Z$ and (iii) the set $V^{s,t}$ is dependence-closed at $s'$
\item[(back)]  for every $t'\in A'$, there is a $t\in A$ such that (i) $V^{s',t'}\subseteq V^{s,t}$, \\
(ii) $(t,t')\in Z$ and (iii) the set $V^{s',t'}$ is dependence-closed at $s$
\end{description}
\end{definition}

LFD formulas do not distinguish between dependence-bisimilar models. Indeed, it was shown in~\cite{localdeps} that a first-order formula is invariant for dependence bisimulations iff it is equivalent to the $tr(\cdot)$ translation of an LFD formula.\footnote{The definition of LFD-bisimulations in \cite{localdeps} is stated differently: it does not have the dependence-closed condition (iii) in the forth and back-clauses but instead includes the requirement that $s\models D_Xy$ iff $s'\models D_Xy$ into the atomic condition. As it turns out, these definitions are equivalent, but the one in \cite{localdeps} also works for type models.}

Dependence bisimulations allow us to show that LFD is `blind' for precisely how assignments assign values to different variables. 
This is captured formally by the  \textit{distinguished model property of LFD}. 
A dependence model is \emph{distinguished} if each variable only takes values in its own range different from the ranges of all other variables. Thus for every object in the domain there is a unique variable to which it can be assigned by an admissible assignment.
\begin{proposition}\label{prop:distinguished}
There is a transformation $(\cdot)^d$ taking any dependence model $M$ into a dependence-bisimilar distinguished  dependence model $M^d$.
\end{proposition}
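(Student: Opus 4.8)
The plan is to make a model distinguished by the simplest possible device: tag every value with the variable to which it is assigned. Given a dependence model $\mathbb{M}=(M,A)$ over $V_{LFD}=\{v_1,\ldots,v_n\}$, define for each $s\in A$ the tagged assignment $s^d$ by $s^d(v_i):=(s(v_i),v_i)$, put $A^d:=\{s^d\mid s\in A\}$, let $dom(M^d)$ be the set of pairs actually occurring as some $s^d(v_i)$, and interpret each $k$-ary $P\in\mathbf{S}$ by $P^{M^d}:=\{\,((a_1,w_1),\ldots,(a_k,w_k))\mid (a_1,\ldots,a_k)\in P^M\,\}$, i.e.\ ignoring the tags. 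Set $\mathbb{M}^d:=(M^d,A^d)$ and let $(\cdot)^d$ be this operation. The map $s\mapsto s^d$ is injective (project onto first coordinates), so $A^d$ is in bijection with $A$. Moreover $\mathbb{M}^d$ is distinguished: the range of $v_i$ in $A^d$ is contained in $dom(M)\times\{v_i\}$, these sets are pairwise disjoint, and $dom(M^d)$ is exactly their union, so each domain element is assigned by an admissible assignment to a unique variable. (If $A=\emptyset$ the statement is vacuous and one may take any one-point domain.)

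Next I would record that tagging is \emph{transparent} to all of LFD's primitives. Since $(s(v_i),v_i)=(t(v_i),v_i)$ iff $s(v_i)=t(v_i)$, we get $s^d=_V t^d$ iff $s=_V t$ for every $V\subseteq V_{LFD}$, and likewise $s^d=_u t^d$ iff $s=_u t$. Three consequences follow: (i) $V^{s^d,t^d}=V^{s,t}$ for all $s,t\in A$; (ii) $\mathbb{M}^d,s^d\models D_Vu$ iff $\mathbb{M},s\models D_Vu$, so a set is dependence-closed at $s$ in $\mathbb{M}$ exactly when it is dependence-closed at $s^d$ in $\mathbb{M}^d$; and (iii) $s^d(\bar u)\in P^{M^d}$ iff $s(\bar u)\in P^M$, directly from the definition of $P^{M^d}$. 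I would also note the small general fact that every agreement set $V^{s,t}$ is automatically dependence-closed at $s$: if $\mathbb{M},s\models D_{V^{s,t}}u$ then, using $t$ itself as a witness (it satisfies $s=_{V^{s,t}}t$ by definition), we obtain $s=_u t$, hence $u\in V^{s,t}$; the inclusion $V^{s,t}\subseteq D^s_{V^{s,t}}$ always holds.

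Finally I would verify that $Z:=\{(s,s^d)\mid s\in A\}$ is a dependence bisimulation between $\mathbb{M}$ and $\mathbb{M}^d$. The \textbf{(atom)} clause is immediate from point (iii) above. For \textbf{(forth)}, given $t\in A$ take $t':=t^d$: clause (ii) holds by definition of $Z$, clause (i) holds because $V^{s,t}=V^{s^d,t^d}$ by point (i), and clause (iii)---that $V^{s,t}$ is dependence-closed at $s^d$---follows from the small fact above applied inside $\mathbb{M}^d$ (with witness $t^d$), again via $V^{s,t}=V^{s^d,t^d}$. The \textbf{(back)} clause is symmetric, since every member of $A^d$ is of the form $t^d$. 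The relation $Z$ is total, so $\mathbb{M}$ and $\mathbb{M}^d$ are dependence-bisimilar, and $\mathbb{M}^d$ is distinguished, as required.

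I expect no serious obstacle: the argument is essentially bookkeeping around the tagging construction. The only point that requires a moment's thought is clause (iii) of the forth/back conditions, which dissolves once one observes that every agreement set $V^{s,t}$ is dependence-closed at $s$; the remaining care is the (routine) edge case of the empty team.
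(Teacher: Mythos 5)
Your proposal is correct and follows essentially the same route as the paper: tag each value with the variable it is assigned to, and check that $\{(s,s^d)\mid s\in A\}$ is a dependence bisimulation (your observation that every agreement set $V^{s,t}$ is automatically dependence-closed at $s$, witnessed by $t$ itself, is exactly the point the paper leaves as ``straightforward to check''). The only cosmetic difference is that you interpret predicates by ignoring tags on all tuples rather than only on tuples arising from a single assignment, but the extra facts this introduces are invisible to admissible assignments and hence harmless.
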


\begin{proof}
For each admissible assignment $s\in A$, let $s^d$ be the assignment defined by $s^d(v):=(v,s(v))$. Now take the team $A_d:=\{s^d\;|\;s\in A\}$ on the domain $\bigcup_{s\in A}s^d[V_{LFD}]$. We interpret relations on the new value domain by setting $$R^{M_d}:=\{(u_1,s(u_1)),...,(u_m,s(u_m))\;|\;(s(u_1),...,s(u_m))\in R^M\}~,$$ making it into a standard model $M_d$. Note that each $s^d:V_{LFD}\to M_d$ and hence $(M_d,A_d)$ is a dependence model which is distinguished by construction. It is now straightforward to check that the relation $\{(s,s_d)\;|\;s\in A\}$ is a dependence bisimulation between $(M,A)$ and $(M_d,A_d)$.
\end{proof}

The distinguished model property  has  to do with the lack of an equality in the language. LFD extended with equality atoms $v_i=v_j$ is  undecidable \cite{localdeps}.

Finally, we  review an abstract type model semantics, also known as `quasi-models' \cite{CRSGuardedLogics,Guardsbounds},  originally used to prove the decidability of LFD in analogy with type models for GF. One can think of type models as finite certificates for satisfiability, or as information reductions of `real' models with objects.

\begin{definition}{(Types)}\\
The \emph{closure} of an LFD formula $\psi$, denoted $Cl(\psi)$, is the smallest set
containing $\psi$ that is closed under subformulas and single negations\footnote{The single negation  $\sim(\psi)=\psi'$ if $\psi=\neg\psi'$ for some $\psi'$, and $\sim(\psi)=\neg\psi$ otherwise.} and that contains all atomic formulas of the form $D_Vu$. A subset $\Delta\subseteq Cl(\psi)$ is a \textit{type} (for $\psi$) if it satisfies the following three properties for the logical operations plus two further constraints on dependence atoms:
\begin{description}
\item[~~($\neg$-Consistency)] $\neg\chi\in\Delta$ iff not $\chi\in\Delta$
\item[~~($\land$-Consistency)] $\chi\wedge\xi\in\Delta$ iff $\chi\in\Delta$ and $\xi\in\Delta$
\item[~~($\mathbb{E}$-Consistency)] $\chi\in\Delta$ implies $\mathbb{E}_V\chi\in\Delta$ whenever $\mathbb{E}_V\chi\in Cl(\psi)$
\item[~~(Projection)] $D_Vu\in\Delta$ for all $V\subseteq V_{LFD}$ and $u\in V$
\item[~~(Transitivity)] $D_V U,D_U W\in\Delta$ implies $D_VW\in\Delta$ for all $V,U,W\subseteq V_{LFD}$  
\end{description}
\end{definition}

For  types $\Delta, \Delta'$ and sets  $V\subseteq V_{LFD}$, we will henceforth write 
$\Delta\sim_V\Delta'$ if $\{\psi\in\Delta\;|\;\free(\psi)\subseteq V\}=\{\psi\in\Delta'\;|\;\free(\psi)\subseteq V\}$.

\vspace{1ex}

\begin{definition}{(LFD Type Models)}\\
Fix an LFD formula $\psi$.
An LFD type model $\mathfrak{M}$  is a set of types (for $\psi$)  satisfying the following two conditions:

\begin{description}
\item[(witness)] $\textrm{if}\;\mathbb{E}_V\varphi\in\Delta\;\textrm{then}\;\exists\Delta'\in\mathfrak{M}\;\textrm{with}\;\Delta\sim_{D^{\Delta}_V}\Delta'\;\textrm{and}\;\varphi\in\Delta'$
\item[(universal)] $\sim_{\emptyset}\;\textrm{is the universal relation on}\;\mathfrak{M}$
\end{description}

\noindent with $D^{\Delta}_V=\{u\in V_{LFD}\;|\;D_Vu\in\Delta\}$  the `dependence-closure' of $V$ w.r.t $\Delta$.\footnote{Note that if $\Delta\sim_V\Delta'$ then $D^{\Delta}_V=D^{\Delta'}_V$ because $\free(D_Vu)=V$.}
\end{definition}
We say $\mathfrak{M}$ is a type model \emph{for} $\psi$ if there is a type $\Delta\in\mathfrak{M}$ with $\psi\in\Delta$. As it turns out, LFD type models always encode the existence of a real dependence model, so they may serve as certificates for the satisfiability of an LFD formula.

\begin{theorem}\label{thm:unraveling}
There is an `unravelling' operation $R$ such that, for each LFD type model $\mathfrak{M}$, $R(\mathfrak{M})$ is a dependence model realizing precisely the types in $\mathfrak{M}$.
\end{theorem}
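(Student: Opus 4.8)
The plan is to adapt the standard \emph{quasi-model unravelling} used for the guarded fragment \cite{CRSGuardedLogics,Guardsbounds} to LFD, building in fact a \emph{distinguished} dependence model directly (so that Proposition~\ref{prop:distinguished} is subsumed rather than invoked). Given a type model $\mathfrak{M}$, I would take the team of $R(\mathfrak{M})$ to be the set of all finite sequences $\sigma=\langle\Delta_0,(V_1,\Delta_1),\ldots,(V_n,\Delta_n)\rangle$ such that each $\Delta_i\in\mathfrak{M}$, each $V_i\subseteq V_{LFD}$ is dependence-closed in $\Delta_{i-1}$, and $\Delta_{i-1}\sim_{V_i}\Delta_i$ (only dependence-closed fixed sets are ever used); write $\mathrm{tp}(\sigma):=\Delta_n$. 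Thus $R(\mathfrak{M})$ is essentially a forest, one tree per type of $\mathfrak{M}$, and a step $(V_i,\Delta_i)$ models a transition that keeps the variables in $V_i$ fixed and lets the others be reassigned. Accordingly, for a variable $v$ and a node $\sigma$ of length $n$ let the \emph{origin} $o_v(\sigma)$ be the prefix of $\sigma$ of length $j$, where $j$ is the largest index $\le n$ with $v\notin V_j$ (and $j:=0$ if there is none) --- the last step at which $v$ was reassigned; put $s_\sigma(v):=(v,o_v(\sigma))$. This makes the model distinguished, and for $v\in V$ a step keeping $V$ fixed leaves $o_v$ unchanged.

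For the relation symbols I would declare $\bigl(s_\sigma(u_1),\ldots,s_\sigma(u_k)\bigr)\in R^{M}$ iff $R(u_1,\ldots,u_k)\in\mathrm{tp}(\rho)$, where $\rho$ is the longest of the origins $o_{u_1}(\sigma),\ldots,o_{u_k}(\sigma)$: these are all prefixes of $\sigma$, hence linearly ordered, and from $\rho$ on every $u_i$ is kept fixed. The delicate point is that this is well defined, i.e.\ independent of how a tuple of objects is presented. If $s_\tau(w_1,\ldots,w_k)=s_\sigma(u_1,\ldots,u_k)$, distinguishedness forces $w_i=u_i$ and $o_{u_i}(\tau)=o_{u_i}(\sigma)$, so $\tau$ too extends this $\rho$ and keeps all $u_i$ fixed from $\rho$ onward; since the atom $R(\bar u)$ has only the variables of $\bar u$ free and each step from $\rho$ onward fixes a set containing all of them, composing $\sim$ along $\rho\to\sigma$ and $\rho\to\tau$ yields $R(\bar u)\in\mathrm{tp}(\sigma)\iff R(\bar u)\in\mathrm{tp}(\rho)\iff R(\bar u)\in\mathrm{tp}(\tau)$. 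The same computation gives the atomic case of the Truth Lemma: $R(\mathfrak{M}),s_\sigma\models R(\bar u)$ iff $R(\bar u)\in\mathrm{tp}(\sigma)$.

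Next I would prove the Truth Lemma --- $R(\mathfrak{M}),s_\sigma\models\varphi$ iff $\varphi\in\mathrm{tp}(\sigma)$, for all $\varphi\in Cl(\psi)$ --- by induction on $\varphi$, the Boolean cases using $\neg$- and $\wedge$-consistency. For $\mathbb{E}_V\varphi$: if $\mathbb{E}_V\varphi\in\mathrm{tp}(\sigma)$, the \textbf{witness} condition supplies $\Delta'\in\mathfrak{M}$ with $\mathrm{tp}(\sigma)\sim_{D^{\mathrm{tp}(\sigma)}_V}\Delta'$ and $\varphi\in\Delta'$, so appending the (valid) step $(D^{\mathrm{tp}(\sigma)}_V,\Delta')$ and applying the induction hypothesis gives a team member $s_{\sigma'}=_V s_\sigma$ satisfying $\varphi$; conversely, if some $s_\tau=_V s_\sigma$ satisfies $\varphi$ (take $V\neq\emptyset$; $V=\emptyset$ is covered by the canonical-value convention below), then by $\mathbb{E}$-consistency $\mathbb{E}_V\varphi\in\mathrm{tp}(\tau)$, and since $s_\sigma$ and $s_\tau$ share a common prefix through all origins of $V$-variables, every step along that prefix to $\sigma$ (resp.\ to $\tau$) fixes a superset of $V$, whence $\mathrm{tp}(\sigma)\sim_V\mathrm{tp}(\tau)$ and, $\mathbb{E}_V\varphi$ having free variables $V$, $\mathbb{E}_V\varphi\in\mathrm{tp}(\sigma)$. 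For $D_Vu$: if $D_Vu\notin\mathrm{tp}(\sigma)$ then $u\notin D^{\mathrm{tp}(\sigma)}_V$, so appending $(D^{\mathrm{tp}(\sigma)}_V,\mathrm{tp}(\sigma))$ produces a team member $=_V s_\sigma$ that reassigns $u$, refuting $D_Vu$; if $D_Vu\in\mathrm{tp}(\sigma)$, then on the common prefix $\rho$ of $s_\sigma$ and any $s_\tau=_V s_\sigma$ one has $D_Vu\in\mathrm{tp}(\rho)$, and because the fixed sets along $\rho\to\sigma$ and $\rho\to\tau$ are dependence-closed supersets of $V$, the atom $D_Vu$ persists along them and, by monotonicity of the dependence-closure, forces $u$ into each fixed set, so $o_u$ is unchanged and $s_\sigma(u)=s_\tau(u)$. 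Finally, every node's type lies in $\mathfrak{M}$, so $R(\mathfrak{M})$ realizes no spurious types, while the length-$0$ nodes $\langle\Delta\rangle$ realize every $\Delta\in\mathfrak{M}$.

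The hard part is the relational coherence behind $R^{M}$: one must get atoms over a ``jointly originating'' set of variables right while never fabricating spurious tuples out of variables reassigned independently, and this is exactly where the dependence-closedness of the fixed sets $V_i$ and the type-model locality ``$\Delta\sim_V\Delta'$ agree on formulas with free variables in $V$'' are used in tandem. A second point requiring care is a variable $u$ that is globally determined, i.e.\ with $D_\emptyset u\in\Delta$ for every $\Delta\in\mathfrak{M}$: since $D_\emptyset u$ is tested against the whole team, such a $u$ must receive one and the same value throughout $R(\mathfrak{M})$, which forces assigning globally determined variables canonical values and relies on the coherence of $\mathfrak{M}$ across $\sim_\emptyset$, for which the \textbf{universal} condition is precisely what is needed. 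Note also that $R(\mathfrak{M})$ is in general infinite; finiteness of models is a separate matter, provided by the finite model property of LFD \cite{Raoul} rather than by $R$ itself.
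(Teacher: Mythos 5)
Your construction is the standard quasi-model unravelling and, in all essentials, the same route the paper takes: team members are finite paths through $\mathfrak{M}$ whose transitions fix dependence-closed sets, values are tagged with the variable and its last reassignment point (making the model distinguished), atoms are read off at the deepest origin, and a Truth Lemma is proved by induction. The bookkeeping for the non-global cases is correct: well-definedness of $R^{M}$ via the linear ordering of origins, the $\mathbb{E}_V$ clause via the \textbf{witness} condition (left to right) and via the shared prefix plus $\free(\mathbb{E}_V\varphi)=V$ (right to left), and the $D_Vu$ clause, where your appeal to ``monotonicity of the dependence-closure'' is exactly the Projection-plus-Transitivity argument showing that $V\subseteq V_j$ and $D_Vu\in\Delta_{j-1}$ force $u\in D^{\Delta_{j-1}}_{V_j}=V_j$.

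The one genuine gap is the point you flag in your final paragraph but do not resolve: globally determined variables and, more importantly, atoms \emph{all} of whose variables are globally determined. First, the ``canonical-value convention'' is never actually defined, and it must be, since without identifying the value of $u$ across the different trees of your forest the clause $D_\emptyset u\in\mathrm{tp}(\sigma)$ is outright falsified by assignments in other trees. Second, once you make that identification, well-definedness and correctness of $R^{M}$ require that any two types of $\mathfrak{M}$ agree on every atom $R(\bar u)$ with each $u_i\in D^{\Delta}_\emptyset$; your assertion that ``the universal condition is precisely what is needed'' does not deliver this, because $\sim_\emptyset$ only forces agreement on formulas with \emph{empty} free-variable set (it does give that $D^{\Delta}_\emptyset$ is the same for all $\Delta$, but not agreement on $R(\bar u)$, whose free-variable set $\{\bar u\}$ is nonempty). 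What you actually need, and should isolate and prove as a lemma, is that $\sim_{D^{\Delta}_\emptyset}$ is universal on $\mathfrak{M}$, i.e.\ that all types agree on all formulas whose free variables are globally determined. Note that the same claim is what you would need if you instead built a single tree rooted at some $\Delta_0$ with first transitions $(D^{\Delta_0}_\emptyset,\Delta)$ --- which would make canonical values automatic but requires exactly this property to reach every type of $\mathfrak{M}$. Until that coherence claim is supplied (or extracted from the intended reading of the \textbf{universal} condition), the gluing of your forest into one dependence model does not go through.
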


In fact, there is a dependence bisimulation (in the sense of \cite{localdeps}) from $R(\mathfrak{M})$  onto $\mathfrak{M}$.
Theorem~\ref{thm:unraveling} implies that LFD is also complete w.r.t.~type models.\footnote{This analysis can  be rephrased  modal-style in terms of  `general relational models' \cite{LFD}.}

\subsection{The Guarded Fragment of First-Order Logic}
By a ``standard model'' we mean an ordinary first-order structure, and for FO formulas $\varphi$,  $\varphi(\bar{x})$ indicates that $\free(\varphi)\subseteq \{\bar{x}\}$. We will be concerned with the Guarded Fragment GF~\cite{ModLangBoundedFrag}. This is the fragment of FO consisting of formulas in which all quantification is \textit{guarded}, i.e., of the form $\exists \bar{y}(G(\bar{x},\bar{y})\wedge\psi(\bar{x},\bar{y}))$,
where $G(\bar{x},\bar{y})$ is an atomic formula and $\free(\psi)\subseteq \free(G(\bar{x},\bar{y}))=\{\bar{x},\bar{y}\}$. We will write guarded quantifiers $\exists \bar{y}(G(\bar{x},\bar{y})\wedge\varphi)$ when it is clear from context that $\varphi$ is guarded. We also use $\forall \bar{y}(G(\bar{x},\bar{y})\to\psi(\bar{x},\bar{y}))$ as 
syntactic sugar for $\neg\exists \bar{y}(G(\bar{x},\bar{y})\wedge\neg\psi(\bar{x},\bar{y}))$.
For simplicity, we will be working with the \emph{equality-free} guarded fragment. 
Our results extend to full GF with equality, cf.~Section~\ref{sec:discussion}.

For later reference, observe that the translation $tr(\cdot)$ from LFD to FO that we discussed earlier lands in GF except for the dependence atoms:  the formula $tr(D_Xy)$ uses unguarded quantification.

For the properties of the Guarded Fragment used in this paper, we refer to the earlier-mentioned sources. In particular, the satisfiability problem for GF is decidable, while the language  has a characterization as a fragment of FO in terms of invariance for \emph{guarded bisimulations}, ~\cite{ModLangBoundedFrag}.

\section{A compositional translation from GF into LFD}\label{sec:tau}



It was shown in \cite{Guardsbounds} that satisfiability in the Guarded Fragment reduces effectively to satisfiability in the logic CRS, The latter, as we mentioned before, corresponds to the fragment of LFD without dependence atoms. 
The translation achieving this used conjunctions of (a) a compositional translation for GF formulas $\varphi$, (b)  a `set-up component'  that records the requirements for a GF type model for $\varphi$, both stated as CRS formulas. 
In contrast, here, we will present a compositional translation from GF to LFD (in fact, to the fragment of the logic LFD without dependence atoms), 
allowing us to transfer properties such as Craig interpolation. Our surprisingly simple translation also neatly exhibits a conceptual difference between first-order variables and LFD variables.

\begin{rem}[Two views of variables]\label{rem:2viewsofvar}
Given the duality between the LFD modality $\mathbb{E}_X$ and the first-order quantifier $\exists X$, it is tempting to translate FO formulas to LFD by  replacing $\exists x \varphi$ by $\mathbb{E}_Y \varphi$ with $Y$  the set of all free variables of $\varphi$ except $x$. But this does not work.  Consider the unsatisfiable first-order sentence $\exists x P(x)\land\neg\exists y P(y)$. Its translation, following the recipe just described, is $\mathbb{E}_\emptyset P(x)\land \neg\mathbb{E}_\emptyset P(y)$. However, 
this LFD formula is satisfied in a dependence model $\mathbb{M}=(M,A)$ where $M$ has domain $\{a,b\}$ with just one atomic fact $P(a)$, and  every assignment in $A$ maps $x$ to $a$ and $y$ to $b$. 
The crux lies  in the fact that, in first-order logic, variables are interchangeable placeholders, whereas in LFD, 
variables have individual behavior,  at least  within a  dependence model.\end{rem}

To deal with this discrepancy, the following translation $\tau$ separates the sets of relevant variables $V_{LFD}$  used for LFD and $V_{GF}$ for GF, and then connects these explicitly via a mapping that can be seen as an abstract substitution.

\begin{definition}
Let $\varphi$ be a FO formula and $\rho:\free(\varphi)\to V_{LFD}$. The translation $\tau_{\rho}(\varphi)$ is defined  as follows, with $\rho=_X\rho'$ for $\rho(x)=\rho'(x)$ for all $x\in X$:
\[\begin{array}{lll} \displaystyle

    \vspace{1mm}
    \tau_{\rho}(P(x_1,\ldots,x_n)) &:=& P(\rho(x_1),\ldots,\rho(x_n))\\
    
    \vspace{1mm}
    \tau_{\rho}(\varphi\land\psi) &:=& \tau_{\rho}(\varphi)\land\tau_{\rho}(\psi) \\
    
        \vspace{1mm}
    \tau_{\rho}(\neg\varphi) &:=& \neg\tau_{\rho}(\varphi) \\
    
        \vspace{1mm}
    \tau_{\rho}(\exists\bar{y} ~ \varphi(\bar{x},\bar{y})) &:=& 
      \displaystyle\bigvee_{\overset{\rho':\{\bar{x},\bar{y}\}\to V_{LFD}}{\rho\;=_{\bar{x}}\;\rho'}}\mathbb{E}_{\rho(\bar{x})} ~ \tau_{\rho'}(\varphi)
\end{array}\]

\vspace{-1ex}
\end{definition}

With this new compositional translation, and assuming that $V_{LFD} = \{v_1, v_2\}$,
the above FO formula
$\exists x P(x)\land\neg\exists y P(y)$  translates to the LFD-formula
$(\mathbb{E}_\emptyset P(v_1) \lor \mathbb{E}_\emptyset P(v_2)) \land \neg(\mathbb{E}_\emptyset P(v_1) \lor \mathbb{E}_\emptyset P(v_2))$,  which, indeed, is unsatisfiable. 
Even so,  $\tau$ is  not an effective satisfiability reduction 
 from FO  to LFD, since LFD is decidable. But as we will show, $\tau$ preserves satisfiability for formulas in the Guarded Fragment. 
 
Our next step is to define a corresponding operation on models. The overall scheme will look like this:

\vspace{1ex}

\begin{center}
\begin{tabular}{lll}
GF-formula $\varphi$ & $\xrightarrow{~~~~~\tau~~~~~}$ & LFD formula $\tau_\rho(\varphi)$ \\[.5em]
standard model $G(\mathbb{M})$ & $\xleftarrow[\text{(surjective)}]{~~~G~~~}$ & dependence model $\mathbb{M}$
\end{tabular}
\end{center}

\vspace{1ex}
%
 %
The model transformation $G$, consists of throwing away 'unnamed' facts.

\begin{definition}
Given any dependence model $\mathbb{M}=(M,A)$, the standard model $G(\mathbb{M})$ has for its domain $dom(M)$ and for its predicate interpretations $R^{G(\mathbb{M})}:=\{\bar{m}\in R^{M} \mid \{\bar{m}\}\subseteq s[V_{LFD}]$ for some $s\in A\}$, where $s[V_{LFD}]$ is the image of $s:V_{LFD}\to M$, i.e. $s[V_{LFD}]=\{s(v)\;|\;v\in V_{LFD}\}$.
\end{definition}

The transformation $G$ is surjective because each standard model $M$ is the $G$ image of its matching `full' dependence model $F(M)=(M,M^{V_{LFD}})$. Note that, while 
$G(F(M))=M$, in general $F(G(\mathbb{M}))$ can be very different from $\mathbb{M}$, since $F(G(\mathbb{M}))$ is always full. A further complication is that, while the operation $G$ is defined on all dependence models, the crucial equivalence theorem that connects $\tau$ with $G$, 
which we will present next, applies only for \emph{distinguished} dependence models. 
We will address this issue separately.

\begin{theorem}\label{thm:distinguished-correctness}
For every distinguished dependence model $\mathbb{M}=(M,A)$, assignment $s\in A$ and GF formula $\varphi$ with map $\rho:\free(\varphi)\to V_{LFD}$ we have

\vspace{-1ex}

$$\mathbb{M},s\models\tau_{\rho}(\varphi)\qquad\textrm{iff}\qquad G(\mathbb{M}),s\circ \rho\models\varphi$$
\end{theorem}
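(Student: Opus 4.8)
The plan is to prove the equivalence by induction on the structure of the GF formula $\varphi$, simultaneously over all distinguished dependence models $\mathbb{M}=(M,A)$, assignments $s\in A$, and maps $\rho:\free(\varphi)\to V_{LFD}$. The key feature that makes this work — and which fails in general — is that in a \emph{distinguished} model, each assignment $s\in A$ is injective on $V_{LFD}$, so $s$ induces a bijection between $V_{LFD}$ and its image $s[V_{LFD}]$; consequently the composed assignment $s\circ\rho:\free(\varphi)\to \mathrm{dom}(M)$ genuinely ``lives inside'' the named part of the model, and the two sides of the equivalence talk about the same tuples of domain elements.

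First I would do the atomic case: $\mathbb{M},s\models P(\rho(x_1),\dots,\rho(x_n))$ holds iff $(s(\rho(x_1)),\dots,s(\rho(x_n)))\in P^M$. I need this to be equivalent to $G(\mathbb{M}), s\circ\rho\models P(x_1,\dots,x_n)$, i.e. to $(s\circ\rho)(x_1,\dots,x_n)\in P^{G(\mathbb{M})}$. Since $P^{G(\mathbb{M})}$ keeps exactly those tuples of $P^M$ that are contained in $s'[V_{LFD}]$ for some $s'\in A$, and the tuple $(s(\rho(x_1)),\dots,s(\rho(x_n)))$ is by construction contained in $s[V_{LFD}]$ with $s\in A$ itself, the two conditions coincide. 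The Boolean cases ($\wedge$, $\neg$) are immediate from the induction hypothesis, since $\tau_\rho$ and the semantics both commute with $\wedge$ and $\neg$, and $\free$ only shrinks under these operations so the same $\rho$ restricts appropriately.

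The main case — and the one I expect to be the real obstacle — is the guarded existential quantifier $\exists\bar y\,(G(\bar x,\bar y)\wedge\varphi(\bar x,\bar y))$, where I must connect an $\mathbb{E}_{\rho(\bar x)}$ modality (which \emph{fixes} the values of the variables $\rho(\bar x)$) with a first-order $\exists\bar y$ (which \emph{varies} $\bar y$). For the direction from LFD to GF: if $\mathbb{M},s\models\bigvee_{\rho'=_{\bar x}\rho}\mathbb{E}_{\rho(\bar x)}\tau_{\rho'}(\varphi)$, pick the disjunct $\rho'$ and the witness $t\in A$ with $s=_{\rho(\bar x)}t$ and $\mathbb{M},t\models\tau_{\rho'}(\varphi)$; by the induction hypothesis $G(\mathbb{M}),t\circ\rho'\models\varphi$. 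I then set $\bar b := t\circ\rho'(\bar y)$ as witnesses for $\exists\bar y$ and must check that (i) $s\circ\rho$ and $t\circ\rho'$ agree on $\bar x$ — this uses $s=_{\rho(\bar x)}t$ together with $\rho=_{\bar x}\rho'$ — so the resulting assignment is $(s\circ\rho)[\bar y\mapsto\bar b]$, and (ii) the guard $G(\bar x,\bar y)$ is satisfied, which holds because $\varphi$ being guarded by $G$ means $\varphi$ semantically entails $G$ in the guarded world... actually more carefully, I should arrange that the disjunction in $\tau$ ranges over $\rho'$ mapping into $V_{LFD}$, and since $\tau_\rho$ was applied to the whole guarded formula including the guard conjunct, $\tau_{\rho'}(G(\bar x,\bar y))$ being true at $t$ gives the guard. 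The crucial subtlety is surjectivity/coverage: every relevant tuple of domain elements witnessing $\exists\bar y$ in $G(\mathbb{M})$ must, because it lies in some $P^{G(\mathbb{M})}$ via the guard atom, be covered by the image $s'[V_{LFD}]$ of some admissible assignment $s'\in A$; distinguishedness then lets me pull this back to a map $\rho':\{\bar x,\bar y\}\to V_{LFD}$ with $\rho'=_{\bar x}\rho$ and $s'\circ\rho'$ matching the witness — this is what makes the finite disjunction over all such $\rho'$ exhaustive. For the converse direction from GF to LFD I run the same correspondence backwards: a witnessing tuple $\bar b$ for $\exists\bar y$ satisfying the guard lies in the named part, hence equals $t[\bar y]$ for suitable $t\in A$ agreeing with $s$ on $\rho(\bar x)$, and I read off the appropriate $\rho'$; I then invoke the induction hypothesis in the other direction. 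I would flag that the delicate bookkeeping is ensuring $\rho'$ is well-defined as a \emph{function} (two first-order variables among $\bar x,\bar y$ could a priori need to map to the same LFD variable, but injectivity of $s$ on $V_{LFD}$ forces this to be consistent with the chosen witness), and that the guard conjunct is exactly what confines the $\exists\bar y$ witnesses to the $G$-named part of $\mathbb{M}$ so that the operation $G$ does not destroy the relevant atomic facts.
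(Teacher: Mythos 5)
Your proposal is correct and follows essentially the same route as the paper's proof: induction on $\varphi$, with the atomic case resting on the guard tuple being contained in $s[V_{LFD}]$, and the quantifier case using the guard atom to confine witnesses to the named part of the model and distinguishedness to recover a well-defined extension $\rho'$ and a witnessing admissible assignment $t$ agreeing with $s$ on $\rho(\bar x)$. The subtleties you flag (exhaustiveness of the disjunction, functionality of $\rho'$) are exactly the points the paper's argument handles.
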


\begin{proof}
By induction on the complexity of the formula $\varphi$. The atomic case says that $(M,A),s\models P\rho(x_1)...\rho(x_n)$ holds iff $G(M,A),s\circ\rho\models Px_1...x_n$. This holds because, by our definition, the fact $P(s(\rho(x_1)), ..., s(\rho(x_n)))$ does not get removed from $(M, A)$ by the transformation $G$. The Boolean cases are routine, so the only important case to analyze is that of guarded quantification.

\vspace{1ex}

From left to right, suppose  $(M,A),s\models\tau_{\rho}(\exists \bar{y}(G(\bar{x},\bar{y})\wedge\varphi)$ (where $\rho:\{\bar{x}\}\to V_{LFD}$). This means that there is some $\rho':\{\bar{x}, \bar{y}\}\to V_{LFD}$ extending $\rho$ such that $(M,A),s\models\mathbb{E}_{\rho[\bar{x}]}(\tau_{\rho'}(G(\bar{x},\bar{y})\wedge\varphi))$, since one of the disjuncts in our translation clause is true. By the LFD semantics, there is then some witnessing assignment $t\in A$ with $s=_{\rho[\bar{x}]}t$ and $t\models \tau_{\rho'}(G(\bar{x},\bar{y})\wedge\varphi)$. By the inductive hypothesis, it now follows that $G(M,A),\;t\circ\rho'\models G(\bar{x},\bar{y})\wedge\varphi$. But since $\rho[\bar{x}]=\rho'[\bar{x}]$ and  $s=_{\rho[\bar{x}]}t$ also $s\circ\rho=_{\bar{x}}t\circ\rho'$. The GF semantics yields that $G(M,A),\;s\circ\rho\models\exists \bar{y}(G(\bar{x},\bar{y})\wedge\varphi)$ where the objects $(t\circ\rho')[\bar{y}]$ witness the bound variables $\bar{y}$.

\vspace{1mm}

From right to left, we need to appeal to the guard in the quantification.  Suppose that $G(M,A),s\circ\rho\models\exists \bar{y}(G(\bar{x},\bar{y})\wedge\varphi)$. By the standard FO semantics, there are objects $\bar{m}=(m_1,...,m_n)$ s.t. $G(M,A),(s\circ\rho) \,[\bar{m}/\bar{y}] \models G(\bar{x},\bar{y})\wedge\varphi$. By distinguishedness, for each $m\in\bar{m}$ there is a \textit{unique} LFD-variable $v_m$ s.t. there is some admissible assignment $s\in A$ with $s(v_{m})=m$. Now consider the true guard fact $G((s \circ \rho)(\bar{x}), \bar{m})$ (which is a fact of $G(M,A)$). Given the uniqueness of variable names in $M$, this fact must have been witnessed by some available assignment $t\in A$ using some atomic formula $G(\rho(\bar{x}), \bar{v_m})$, where $t \circ \rho(x) = s \circ \rho(x)$ for all $x \in \bar{x}$. It follows that $s =_{\rho(\bar{x})} t$. Now let $\rho':\{\bar{x},\bar{y}\}\to V_{LFD}$ extend $\rho$ by mapping $y_i\mapsto v_{m_i}$ for each $1\leq i\leq n$. It follows that (i) $s =_{\rho(\bar{x})} t$  in the model $(M, A)$, and (ii) $t \circ \rho' =_{\{\bar{x},\bar{y}\}} s\circ\rho \,[\bar{m}/\bar{y}]$ as assignments in the model $G(M, A)$. In particular then, since the truth  of a FO-formula under an assignment depends only on what the assignment maps its free variables to,  $G(M,A), t\circ\rho'  \models G(\bar{x},\bar{y})\wedge\varphi$. Applying the inductive hypothesis, we see that $M, t \models  \tau_{\rho'}(G(\bar{x},\bar{y})\wedge\varphi)$ and hence  $M, s \models  \mathbb{E}_{\rho(\bar{x})}(\tau_{\rho'}(G(\bar{x},\bar{y})\wedge\varphi)$. This is one of the disjuncts in the translation  $\tau_{\rho}(\exists \bar{y}(G(\bar{x},\bar{y})\wedge\varphi))$, and we are done. \end{proof}

As noted before, the map $G$ from dependence models to standard models is 
surjective. However, not every standard model is the $G$-image
of a \emph{distinguished} dependence model. This is an issue, because Theorem~\ref{thm:distinguished-correctness} applies only to distinguished dependence models.
Fortunately, we have the following:

\begin{theorem}\label{thm:gf-bisimilar-distinguished}
Let the number of variables in $V_{LFD}$ be at least as great as the maximum arity of
relations in the signature.
Then, for every standard model $M$ and guarded assignment $s$, 
there is a distinguished dependence model $\mathbb{M}$  and an 
admissible LFD assignment $t$ in $\mathbb{M}$ such
that $(G(\mathbb{M}), t\circ\rho)$ is GF-bisimilar to $(M,s)$,  for some map $\rho$.
\end{theorem}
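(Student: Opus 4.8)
The plan is to build the desired distinguished dependence model $\mathbb{M}$ directly as an unravelling of $(M,s)$, exploiting that GF-bisimilarity only constrains behaviour on guarded tuples. First I would pass from $(M,s)$ to its \emph{guarded unravelling} $M^*$, i.e. the standard (tree-like) model obtained by unravelling $M$ along its guarded tuples; this is a well-known construction for GF, and $(M^*, s^*)$ is GF-bisimilar to $(M,s)$ for the natural root assignment $s^*$. The point of moving to $M^*$ is that in a guarded unravelling every tuple that lies inside some guard atom comes from a single ``guarded piece'' and can be assigned a canonical tuple of distinct LFD-variables; the hypothesis that $|V_{LFD}|$ is at least the maximum arity is exactly what makes such an assignment possible.

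Next I would read off a dependence model from $M^*$. Take as team $A$ the set of all assignments $t : V_{LFD} \to \mathrm{dom}(M^*)$ whose image $t[V_{LFD}]$ is contained in (the element set of) some guarded tuple of $M^*$ — in other words, the admissible assignments are those that ``name'' the elements of a single guarded piece, in every possible way. Since $|V_{LFD}|$ is at least the maximum arity, each guarded tuple is covered by at least one such assignment, so $G(\mathbb{M})$ will retain exactly the guarded atoms of $M^*$, i.e. all the atomic facts that matter for GF-(bi)similarity; the unguarded facts of $M^*$ are thrown away by $G$, but those are invisible to GF anyway. I would then observe that $\mathbb{M}=(M^*, A)$ need not yet be distinguished, and apply Proposition~\ref{prop:distinguished} to replace it by the dependence-bisimilar distinguished model $\mathbb{M}^d = (M^*_d, A_d)$; since the relabelling $(\cdot)^d$ only tags each value $m$ with the variable carrying it, it commutes with $G$ up to a harmless renaming of domain elements, so $G(\mathbb{M}^d)$ is isomorphic to $G(\mathbb{M})$ (restricted to named elements), hence still GF-bisimilar to the relevant reduct of $M^*$.

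Finally I would choose the root assignment $t$ and the map $\rho$: pick any $t \in A_d$ whose image is the guarded piece at the root of the unravelling, and let $\rho : \free(\varphi) \to V_{LFD}$ send each free variable of the target formula to the LFD-variable naming the corresponding root element, so that $t \circ \rho$ picks out $s^*$ (equivalently $s$) on $\free(\varphi)$. Composing the GF-bisimulations — $(M,s)$ to $(M^*,s^*)$ from the unravelling, and $(M^*,s^*)$ to $(G(\mathbb{M}^d), t\circ\rho)$ from the fact that $G$ preserves precisely the guarded atoms and relates guarded tuples of the two models — yields the required GF-bisimilarity.

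The main obstacle I anticipate is the middle step: showing cleanly that $(G(\mathbb{M}^d), t\circ\rho)$ is GF-bisimilar to $(M^*, s^*)$. One has to check that every guarded tuple of $M^*$ is indeed covered by some admissible assignment (so that no guard atom is lost under $G$), that the back-and-forth conditions of a \emph{guarded} bisimulation can be met using the freedom to re-name a guarded piece by any tuple of variables of the right length, and that passing through Proposition~\ref{prop:distinguished} does not disturb this. Handling the arity bound carefully — in particular, guarded tuples of length strictly less than $|V_{LFD}|$, where several LFD-variables must share a value — is the delicate bookkeeping point; everything else is routine once the unravelling is in place.
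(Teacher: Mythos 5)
Your plan ends up building essentially the same object as the paper's proof, but by a more roundabout route, and the one step you wave through is exactly the step where the real work lives. The paper's proof skips the guarded unravelling entirely: it takes $Dom(M)\times V_{LFD}$ as the new domain, puts $((m_1,u_1),\ldots,(m_k,u_k))$ in $R^{M'}$ iff $(m_1,\ldots,m_k)\in R^M$ and distinct $m_i$ carry distinct labels $u_i$, and lets the team consist of all assignments sending $v$ to a pair of the form $(m,v)$. This is distinguished by construction, and the GF-bisimulation with $(M,s)$ is simply the collection of projections $(m,v)\mapsto m$ restricted to guarded subsets. Your construction (team of all assignments into guarded tuples, followed by the $(\cdot)^d$ relabelling of Proposition~\ref{prop:distinguished}) produces the same tagged model up to the unravelling, which buys you nothing here: GF-bisimilarity is all that is needed, and the product-with-$V_{LFD}$ trick already delivers it without tree-shaping the model. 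Also note that in $M^*$ every atomic fact guards itself, so there are no ``unguarded facts'' for $G$ to discard; with $|V_{LFD}|$ at least the maximum arity, $G(\mathbb{M})$ is just $M^*$ again.

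The genuine gap is the claim that $(\cdot)^d$ ``commutes with $G$ up to a harmless renaming of domain elements, so $G(\mathbb{M}^d)$ is isomorphic to $G(\mathbb{M})$ (restricted to named elements).'' That is false: the relabelling replaces each element $m$ by several elements $(v,m)$, one for each variable that can name it, and each fact $R(\bar m)$ of $M^*$ becomes many facts of $M_d$, one for each compatible labelling; the two structures do not even have the same cardinality. What is true, and what you must prove, is that the projection $(v,m)\mapsto m$ restricted to guarded sets is a GF-bisimulation between $G(\mathbb{M}^d)$ and $M^*$ --- which, composed with the unravelling bisimulation, gives the theorem. You correctly flag this as the delicate middle step, but as written the plan substitutes an incorrect isomorphism claim for it. Checking the back-and-forth conditions for this projection (in particular the back clause, where one must choose a variable-labelling of a target guarded tuple consistent with the labels already fixed on the overlap, using $|V_{LFD}|\geq$ max arity, and where non-injectivity of the projection is tolerable only because the fragment is equality-free) is precisely the content of the paper's proof; once you do that, the detours through $M^*$ and Proposition~\ref{prop:distinguished} become unnecessary.
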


\begin{proof}
Let $\mathbb{M}=(M',A)$ where $M'$, $A$ are as follows:
\begin{itemize}
\item $Dom(M') = Dom(M) \times V_{LFD}$, and
\item $((m_1, u_1) \ldots, (m_k, u_k))\in R^{M'}$ iff 
    $(m_1, \ldots, m_k)\in R^M$ and for all $i,j\leq k$, if $m_i\neq m_j$ then $u_i\neq u_j$.
\item $A\subseteq Dom(M')^{V_{LFD}}$ consist of all assignments that
map each LFD-variable $v$ to a pair of the form $(m, v)$ for $m\in dom(M)$ some object in the old domain.
\end{itemize}
It is clear from the construction that $\mathbb{M}$ is distinguished. 

\medskip
\medskip

Since $s$ is a guarded assignment, there exists a tuple $(m_1, \ldots, m_k)\in R^{M}$ for 
some relation $R$, such that $s$ maps every variable to an element of this tuple. 
We know that $|V_{LFD}|\geq k$. Let $\rho$ be an arbitrary mapping from FO variables
to LFD variables, such that $s(x)\neq s(y)$ iff $\rho(x)\neq \rho(y)$. Finally, let $t$
be an LFD assignment that maps each $\rho(x)$ to the pair $(s(x),\rho(x))$. In this
way, we have that, for each FO variable $x$, $(t\circ \rho)(x) = (s(x),\rho(x))$. Furthermore,
by construction, $t$ is an admissible assignment for $\mathbb{M}$.

It remains only to show that $(G(\mathbb{M}, t\circ \rho)$ is GF-bisimilar to $(M,s)$.
The GF bisimulation in question consists of all partial functions $f:Dom(M')\to Dom(M)$ 
such that (i) $dom(f)$ is a guarded subset of $dom(M')$, and (ii) $f$ is the natural 
projection on its domain, i.e., $f((m,v)=m$. It can be easily verified that this 
satisfies all requirements of a GF-bisimulation. 
\end{proof}


We state one immediate consequence, further implications of our translations will be discussed in Section 5 below.

\begin{corollary}\label{cor:tau}
Let the number of variables in $V_{LFD}$ be at least as great as the maximum arity of
relations in the signature.
Then, a GF-formula $\varphi$ is satisfiable on standard models iff, for some function $\rho:\free(\varphi)\to V_{LFD}$,  $\tau_{\rho}(\varphi)$ is satisfiable on dependence models.
\end{corollary}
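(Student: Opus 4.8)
The plan is to derive Corollary~\ref{cor:tau} by combining the two correctness theorems we have just proved, using the distinguished model property only in its abstract, bisimulation-invariant form. The statement is a biconditional, so I would prove the two directions separately.

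For the easy direction, assume $\varphi$ is satisfiable on standard models, say $M,s\models\varphi$. Since $\varphi$ is a GF sentence in prenex-guarded form... actually $\varphi$ may have free variables, but they are guarded, so I may assume $s$ is a guarded assignment (if $\varphi$ is a sentence the requirement is vacuous; otherwise any satisfying assignment of a guarded formula may be taken to be guarded). Now apply Theorem~\ref{thm:gf-bisimilar-distinguished}: using the hypothesis that $|V_{LFD}|$ is at least the maximum arity, we obtain a distinguished dependence model $\mathbb{M}$, an admissible assignment $t$, and a map $\rho$ such that $(G(\mathbb{M}),t\circ\rho)$ is GF-bisimilar to $(M,s)$. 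GF-formulas are invariant under GF-bisimulations, so $G(\mathbb{M}),t\circ\rho\models\varphi$. Then Theorem~\ref{thm:distinguished-correctness} — which applies because $\mathbb{M}$ is distinguished — gives $\mathbb{M},t\models\tau_\rho(\varphi)$, so $\tau_\rho(\varphi)$ is satisfiable on dependence models, as required.

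For the converse, suppose $\tau_\rho(\varphi)$ is satisfiable on some dependence model, say $\mathbb{N},s\models\tau_\rho(\varphi)$ for some $\rho:\free(\varphi)\to V_{LFD}$ and $s\in A$. Here $\mathbb{N}$ need not be distinguished, so Theorem~\ref{thm:distinguished-correctness} does not apply directly; this is the one point that needs care. I would invoke Proposition~\ref{prop:distinguished} to pass to the distinguished model $\mathbb{N}^d$ together with the dependence bisimulation $Z=\{(s,s^d)\mid s\in A\}$. Since LFD-formulas are invariant under dependence bisimulations, $\mathbb{N}^d,s^d\models\tau_\rho(\varphi)$. Now Theorem~\ref{thm:distinguished-correctness} applies to $\mathbb{N}^d$, yielding $G(\mathbb{N}^d),s^d\circ\rho\models\varphi$, so $\varphi$ is satisfiable on a standard model.

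The main obstacle — really the only non-bookkeeping point — is the mismatch between the scope of Theorem~\ref{thm:distinguished-correctness} (distinguished models only) and the quantifier structure of the corollary: in the ``only if'' direction we must manufacture a distinguished witness, which is exactly what Theorem~\ref{thm:gf-bisimilar-distinguished} supplies (and why the arity hypothesis on $|V_{LFD}|$ is needed there — we need enough LFD-variables to separate the elements of a guarded tuple); in the ``if'' direction we must distinguish a given arbitrary dependence model, which is what Proposition~\ref{prop:distinguished} supplies. Both reductions rely on the respective bisimulation-invariance facts (GF-bisimulations for GF-formulas, dependence bisimulations for LFD-formulas), so the proof is genuinely just a diagram chase once those ingredients are assembled. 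I would also remark that the ``if'' direction in fact needs no hypothesis on $|V_{LFD}|$; that hypothesis is only used in the ``only if'' direction.
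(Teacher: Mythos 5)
Your proof is correct and follows essentially the same route as the paper's: Proposition~\ref{prop:distinguished} plus dependence-bisimulation invariance for the ``if'' direction, and Theorem~\ref{thm:gf-bisimilar-distinguished} plus GF-bisimulation invariance plus Theorem~\ref{thm:distinguished-correctness} for the ``only if'' direction (the paper justifies the reduction to a guarded assignment by noting that every GF-formula is a Boolean combination of self-guarded formulas, which is slightly more careful than your parenthetical remark, but the argument is the same).
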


\begin{proof}
From right to left, if the LFD-formula $\tau_{\rho}(\varphi)$ is satisfiable, it is also satisfiable 
(by Proposition~\ref{prop:distinguished})
on a pointed distinguished model $(M,A),s$, and so, by Theorem 3.4 we also have $G(M,A),s\circ\rho\models\varphi$. 

From left to right, suppose  $M, t\models\varphi$. Since every GF-formula is a Boolean combination of self-guarded formulas, we can assume without loss of generality that $t$ is a guarded assignment. 
It then follows  from Theorem~\ref{thm:gf-bisimilar-distinguished} together with 
Theorem~\ref{thm:distinguished-correctness} and the invariance of GF-formulas under GF bisimulations, that, for some map $\rho$, $\tau_\rho(\varphi)$ is satisfied in a (distinguished) dependence model.
 \end{proof}

%
%
%

Given the compositional nature of our translations $\tau_\rho$ on the Boolean operations, this  immediately also implies a faithful reduction of semantic consequence in GF to semantic consequence in LFD.


\section{A translation from LFD to GF}

We now present a satisfiability-preserving translation from LFD to GF.
Note that the translation $tr(\cdot)$ from LFD to FO given in Section~\ref{sec:prel}
yields FO formulas that are, in general, not guarded. For example, 
if $V_{LFD}=\{x,y,z\}$, then
$tr(D_{x}y)$ is the first-order formula $\varphi(x)= \forall y'z' (A(x,y',z')\to y=y')$, which uses unguarded quantification. Note, though, that
the $tr$-translation of an LFD formula without dependence atoms is guaranteed to be
a GF formula.
In this section, we present a satisfiability-preserving translation from full LFD, including the dependence atoms, to GF.
This translation works by replacing dependence atoms by relational atoms using fresh relations, and adding  guarded conditions that force these relations to behave like dependence atoms.\footnote{This same trick with additional predicates is used in proving the finite dependence model property for LFD by an appeal to Herwig's theorem~\cite{Raoul}.}

\begin{definition} ($\widehat{\mathbf{S}}$)
LFD was defined relative to a relational
signature $\mathbf{S}$ and a finite set of variables $V_{LFD}$. 
We now denote by $\widehat{\mathbf{S}}$ the relational signature that extends $\mathbf{S}$ with an $n$-ary relation $A$, for $n=|V_{LFD}|$, and with
a fresh $k$-ary relation symbol $R^{X,Y}$ for each pair of subsets $X,Y\subseteq V_{LFD}$, 
where $k=|X|$.
\end{definition}


The overall scheme of our translation will look like this:

\medskip

\begin{center}
\begin{tabular}{l@{}c@{}l}
LFD-formula $\varphi$ over $\mathbf{S}$ & $\xrightarrow{~~~~~\sigma~~~~~}$ & GF formula $\sigma(\varphi)$ over $\widehat{\mathbf{S}}$ \\[1em]
dependence model $R(\mathfrak{M}_M)$ & & standard model $M$ \\
\hfill $\nwarrow$ && $\swarrow$ \\
& LFD type model $\mathfrak{M}_M$
\end{tabular}
\end{center}

\medskip
In particular, our operation on models takes a detour through type models. In this diagram,
 $R$ is the mapping from type models to dependence models given by Theorem~\ref{thm:unraveling}.

Next we define the matching formula translation $\sigma$.

\begin{definition}\label{def:sigmatransLFDtoGF}
Let $V_{LFD}=\{v_1, \ldots, v_n\}$.
For an LFD-formula $\psi$ over a signature $\mathbf{S}$, we define $\sigma(\psi)$ as the following GF-formula over the signature
$\widehat{\mathbf{S}}$: 
$$\sigma(\psi)~~:=~~ tr^\bullet(\psi)\wedge setup(\psi) \land A(v_1, \ldots, v_n)$$

\vspace{-1ex}
\noindent where
\begin{itemize}
    \item $tr^\bullet(\psi)$ is defined inductively in the same way as $tr(\psi)$, with the only difference  that $tr^\bullet(D_V U)) = R^{V,U}(\bar{v})$. 
    (Here, for the sake of presentation, we assume a fixed, canonical ordering on variables, 
    so that it is clear which element of the tuple $\bar{v}$ corresponds to which element of the set $V$.)
    
\item $setup(\psi)$ is the conjunction of all GF-sentences of the form

\vspace{-1ex}

\end{itemize}
\begin{align*}
    &\quad\forall v_1, \ldots, v_n(A(v_1, \ldots, v_n)\to \bigwedge_{u\in V}R^{V,u}(\bar{v}))\\
    &\quad\forall v_1, \ldots, v_n(A(v_1, \ldots, v_n)\to (R^{V,U}(\bar{v})\wedge R^{U,W}(\bar{u})\to R^{V,W}(\bar{v}))\\
    &\quad\forall v_1, \ldots, v_n(A(v_1, \ldots, v_n)\to[R^{V,U}(\bar{v})\land tr^\bullet(\mathbb{E}_V\xi)\to
        tr^\bullet(\mathbb{E}_{V\cup U}\xi)])
\end{align*}
where $V,U,W\subseteq V_{LFD}$; $\bar{v}, \bar{u}, \bar{w}$ are canonical enumerations of these sets and $\xi\in Cl(\psi)$ is a formula whose main connective
is not conjunction or negation.
\end{definition}

The formulas in the ``set up'' part of the above translation can be viewed as instances of the Projection, Transitivity and Transfer axioms in the axiomatization of LFD given in~\cite{LFD}.


Next, we define our operation $H$, which takes a standard model and produces an LFD type model.

\begin{definition}\label{def:type-model-construction}
Let $V_{LFD} = \{v_1, \ldots, v_n\}$ and let $\psi$ be an LFD formula.
For a standard model $M\models setup(\psi)$, we define $\mathfrak{M}_M^\psi$ as:
\[\mathfrak{M}_M^\psi :=\{\textrm{type}^\psi(M,s) \mid \text{$s:V_{LFD}\to Dom(M)$ such that $M,s\models  A(v_1, \ldots, v_n)$} \}\]
with $\textrm{type}^\psi(M,s) = \{\varphi\in Cl(\psi)\mid M,s\models tr^\bullet(\varphi)\}$, 
where $Cl(\psi)$ is the closure of $\varphi$, as defined in Section~\ref{sec:type-models}.
\end{definition}

\begin{lemma}\label{lem:setup}
If $M\models setup(\psi)$, then $\mathfrak{M}_M^\psi$  is an LFD type model. 
\end{lemma}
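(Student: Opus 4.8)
The plan is to verify directly that $\mathfrak{M}_M^\psi$ satisfies all the defining conditions of an LFD type model. This breaks into two halves: first, checking that each element $\mathrm{type}^\psi(M,s)$ is actually a type (the five clauses: $\neg$-Consistency, $\land$-Consistency, $\mathbb{E}$-Consistency, Projection, Transitivity); second, checking the two global conditions (witness) and (universal).

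For the first half, the key observation is that $tr^\bullet(\cdot)$ commutes with negation and conjunction, so $\neg$-Consistency and $\land$-Consistency follow immediately from the fact that $M,s$ is a classical two-valued assignment to first-order formulas: $\neg\chi \in \mathrm{type}^\psi(M,s)$ iff $M,s\models tr^\bullet(\neg\chi) = \neg tr^\bullet(\chi)$ iff $M,s\not\models tr^\bullet(\chi)$ iff $\chi\notin \mathrm{type}^\psi(M,s)$, and similarly for conjunction; one should note here that $Cl(\psi)$ is closed under subformulas and single negations, so membership questions stay inside $Cl(\psi)$. $\mathbb{E}$-Consistency says $\chi\in\Delta$ implies $\mathbb{E}_V\chi\in\Delta$ whenever $\mathbb{E}_V\chi\in Cl(\psi)$: this holds because if $M,s\models tr^\bullet(\chi)$, then since $M,s\models A(\bar v)$, the assignment $s$ itself witnesses the existential $tr^\bullet(\mathbb{E}_V\chi) = \exists\bar z\,(A(\bar v)\wedge tr^\bullet(\chi))$ (taking the witness $\bar z$ to equal $s$ on $V_{LFD}\setminus V$). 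Projection, $D_Vu\in\Delta$ for $u\in V$, and Transitivity, $D_VU, D_UW\in\Delta \Rightarrow D_VW\in\Delta$, are exactly what the first two families of $setup(\psi)$ sentences enforce, once one unwinds $tr^\bullet(D_Vu) = R^{V,u}(\bar v)$ and uses $M,s\models A(\bar v)$ to instantiate the universally quantified guard; here one should be a little careful that Transitivity for sets $D_VU, D_UW$ reduces, via the conjunctive reading $D_VU = \bigwedge_{u\in U}D_Vu$, to the single-variable $R^{\cdot,\cdot}$ instances actually present in $setup(\psi)$, which is a routine bookkeeping point given the canonical ordering convention.

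For the second half, (universal) says $\sim_\emptyset$ is the universal relation on $\mathfrak{M}_M^\psi$; this holds because $\{\psi'\in\Delta \mid \free(\psi')\subseteq\emptyset\}$ consists only of the $D_\emptyset u$ atoms and the $\mathbb{E}_\emptyset$-formulas, and membership of such a formula in $\mathrm{type}^\psi(M,s)$ is independent of $s$ — indeed $tr^\bullet$ of an $\mathbb{E}_\emptyset$-formula is a sentence (no free variables among $\bar v$), and $tr^\bullet(D_\emptyset u) = R^{\emptyset,u}()$ is likewise a $0$-ary/sentence-like atom whose truth does not depend on $s$; so all types in $\mathfrak{M}_M^\psi$ have the same $\emptyset$-restriction. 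Finally (witness): if $\mathbb{E}_V\varphi\in\Delta = \mathrm{type}^\psi(M,s)$, we must produce $\Delta'\in\mathfrak{M}_M^\psi$ with $\Delta\sim_{D_V^\Delta}\Delta'$ and $\varphi\in\Delta'$. From $M,s\models tr^\bullet(\mathbb{E}_V\varphi)$ we get a witness $t$ with $s=_V t$ and $M,t\models tr^\bullet(\varphi)\wedge A(\bar v)$, so $\Delta' := \mathrm{type}^\psi(M,t)\in\mathfrak{M}_M^\psi$ and $\varphi\in\Delta'$. The content is showing $\Delta\sim_{D_V^\Delta}\Delta'$, i.e. that $s$ and $t$ agree on the $tr^\bullet$-truth of every formula in $Cl(\psi)$ whose free variables lie in $D_V^\Delta$. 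This is where the third family of $setup$ sentences enters: $s=_V t$ already gives agreement on formulas with free variables in $V$, and one argues by induction on formula structure that the Transfer-style sentences $R^{V,U}(\bar v)\wedge tr^\bullet(\mathbb{E}_V\xi)\to tr^\bullet(\mathbb{E}_{V\cup U}\xi)$ — together with Projection/Transitivity already verified — propagate agreement from $V$ up to the full dependence-closure $D_V^\Delta$, handling the $\mathbb{E}$-case of the induction; the Boolean cases are immediate since $tr^\bullet$ commutes with them.

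The main obstacle I expect is precisely the (witness) step, specifically the verification that $\Delta\sim_{D_V^\Delta}\Delta'$: one has to set up the right inductive statement (agreement of $s,t$ on $tr^\bullet$-values of all $Cl(\psi)$-formulas with free variables inside $D_V^\Delta$), and carefully deploy the Transfer sentences of $setup(\psi)$ in the modal case, using that $\xi$ ranges over closure formulas whose main connective is not $\wedge$ or $\neg$ (so the remaining atomic/dependence/$\mathbb{E}$ cases are exactly the ones covered) and that $D_V^\Delta$ is genuinely closed under the dependencies recorded in $\Delta$ thanks to Projection and Transitivity. Everything else is routine unwinding of the definitions of $tr^\bullet$, $setup$, and $\mathrm{type}^\psi$.
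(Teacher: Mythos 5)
Your proposal is correct and follows essentially the same route as the paper's proof: verify the five type conditions (Boolean/$\mathbb{E}$-consistency from the shape of $tr^\bullet$, Projection and Transitivity from the first two families of $setup$ sentences), get (universal) from the fact that sentences translate to sentences, and for (witness) take the witnessing assignment $t$, reduce agreement on $D_V^\Delta$ to agreement on non-decomposable closure formulas, and propagate that agreement from $V$ to $D_V^\Delta$ via the Transfer conjuncts of $setup(\psi)$. The only cosmetic difference is that you phrase the last step as an induction, whereas the paper gives it as a flat argument over non-decomposable formulas (existential generalisation, transfer of $tr^\bullet(\mathbb{E}_V\xi)$ along $s=_V t$, then the Transfer axiom and locality); the substance is the same.
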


\begin{proof}
First, we show that each $\Delta\in\mathfrak{M}_M^\psi$ is a type. The $\neg$-consistency, 
$\land$-consistency, and $\mathbb{E}$-consistency properties are immediate from the
construction of $\mathfrak{M}_M^\psi$ and the definition of $tr^\bullet$. Projection and transitivity 
hold by virtue of the corresponding conjuncts of $setup(\psi)$. 

The \emph{universal} property, i.e., that 
$\sim_{\emptyset}$ is the universal relation on $\mathfrak{M}$, also follows 
immediately from the construction of $\mathfrak{M}_M^\psi$ using the fact that
$tr^\bullet$-translation of an LFD-sentence is a GF-sentence.

Finally, for the \emph{witness} property, suppose that 
$\mathbb{E}_V\varphi\in \Delta$. Let $s$ be a witnessing assignment for $\Delta$.
Then, in particular, $M,s\models A(v_1, \ldots, v_n)\land tr^\bullet(\mathbb{E}_V\varphi)$.
It follows by the definition of $tr^\bullet$ that there is an assignment $t$
with $t=_{V} s$, such that $M,t\models A(v_1, \ldots, v_n)\land tr^\bullet(\varphi)$.
Let $\Delta'\in\mathfrak{M}_M^\psi$ be the type obtained from $t$ as in Definition~\ref{def:type-model-construction}. Then $\varphi\in \Delta'$. Moreover, since $t=_V s$, and the translation $tr^\bullet$ preserves the 
free variables of the formula, we have that $\Delta\sim_{V}\Delta'$. In order to satisfy the
witness property of type models, we must show something slightly stronger, namely
that $\Delta\sim_{D^\Delta_V}\Delta'$.
This can be shown in two steps: (i) it is easy to see that
$D^{\Delta'}_V = D^\Delta_V$ by virtue of the fact that $\free(tr^\bullet(D_{V}U))=V$; (ii)
the third conjunct of $setup(\psi)$ now allows us to lift the $\Delta\sim_V\Delta'$ relation to the
$\Delta\sim_{D^\Delta_V}\Delta'$. To see this, first note that, by the 
$\neg$-consistency and $\land$-consistency properties of types, it is enough
to show that $\Delta$ and $\Delta'$ agree on ``non-decomposable'' formulas
with free variables in $D^\Delta_V$, where we call a formula decomposable if
its main connective is a conjunction or a negation. 
Let $\xi\in\Delta$ with $Y=\free(\xi)\subseteq D^{\Delta}_V$,
and assume that $\xi$ is non-decomposable. 
Then  $M,s\models tr^{\bullet}(\xi)$ and $M,s\models R^{V,Y}(\bar{v})$.
By existential generalisation (cf.~the definition of $tr^\bullet$) we get
that $M,s\models tr^{\bullet}(\mathbb{E}_{V}\xi)$. 
Since $s=_Vt$, this implies $M,t\models tr^{\bullet}(\mathbb{E}_V\xi)$ and
$M,t\models R^{V,Y}(\bar{v})$ as well. Since we know that $M,t\models setup(\psi)$, we can now apply the `transfer' condition to get that $M,t\models tr^{\bullet}(\mathbb{E}_{V\cup Y}\xi)$ which implies that $M,t\models\xi$ as $\free(\xi)\subseteq V\cup Y$. Hence $\xi\in\Delta'$.
The converse direction follows from a symmetric argument.
\end{proof}

We now obtain our dependence model $H(M)$ by applying the unraveling operation $R$
from Section~\ref{sec:type-models} to $\mathfrak{M}_M^\psi$.

\begin{definition}
Fix an LFD formula $\psi$ over signature $\mathbf{S}$,  let $M$ be a
 standard model for $\widehat{\mathbf{S}}$ s.t. $M\models setup(\psi)$, and let $s:V_{LFD}\to dom(M)$ such that $M,s\models A(v_1, \ldots, v_n)$. 
 \begin{enumerate}
 \item We define $H(M)$ as $R(\mathfrak{M}_M^\psi)$.
 \item We define $H(s)$
as an (arbitrarily chosen) admissible LFD assignment on $H(M)$ that realizes
 $\textrm{type}^\psi(M,s)$ in $H(M)$.
\end{enumerate}
\end{definition}

While not reflected in our notation, the model $H(M)$ depends on $\psi$.
Also, note that  $\textrm{type}^\psi(M,s)\in \mathfrak{M}^\psi_{M}$, and, by Theorem~\ref{thm:unraveling},
 $H(M)$ realizes every type in $\mathfrak{M}^\psi_{M}$, making $H(s)$ indeed well-defined.

\begin{theorem}\label{thm:lfd-to-gf-correctness}
Fix an LFD formula $\psi$ over signature $\mathbf{S}$, and let $M$ be 
standard model over the signature $\widehat{\mathbf{S}}$ such that $M\models setup(\psi)$,
and let $s:V_{LFD}\to Dom(M)$,  such that $M,s\models A(v_1, \ldots, v_n)$. Then, for all formulas
$\varphi\in Cl(\psi)$,
\[H(M),H(s)\models\varphi\qquad\textrm{iff}\qquad M,s\models tr^\bullet(\varphi)\]
\end{theorem}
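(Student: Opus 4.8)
The plan is to prove the equivalence by induction on the structure of $\varphi \in Cl(\psi)$, combining the correctness of the type-model unravelling (Theorem~\ref{thm:unraveling}) with the fact, established in Lemma~\ref{lem:setup}, that $\mathfrak{M}_M^\psi$ is a genuine LFD type model. The key observation linking the two sides is the following: by construction, $\textrm{type}^\psi(M,s) = \{\varphi \in Cl(\psi) \mid M,s\models tr^\bullet(\varphi)\}$, and since $H(s)$ is chosen to realize exactly this type in $H(M) = R(\mathfrak{M}_M^\psi)$, we have (by Theorem~\ref{thm:unraveling}, which guarantees $R(\mathfrak{M}_M^\psi)$ realizes \emph{precisely} the types in $\mathfrak{M}_M^\psi$, via a dependence bisimulation) that $H(M), H(s) \models \varphi$ iff $\varphi \in \textrm{type}^\psi(M,s)$ iff $M,s\models tr^\bullet(\varphi)$, \emph{provided} one knows that $H(M),H(s)\models \varphi$ is equivalent to $\varphi$ being in the type realized at $H(s)$. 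That last equivalence is exactly the content of the ``truth lemma'' for type models, and this is where the induction does its work.

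So concretely I would argue as follows. First reduce to showing: for every $\varphi\in Cl(\psi)$ and every type $\Delta \in \mathfrak{M}_M^\psi$ realized at an assignment $r$ in $H(M)$, $H(M),r\models\varphi$ iff $\varphi\in\Delta$. Granting this, the theorem follows by taking $\Delta = \textrm{type}^\psi(M,s)$, $r = H(s)$, and unwinding the definition of $\textrm{type}^\psi(M,s)$ via $tr^\bullet$. The induction on $\varphi$: atomic relational atoms $P(\bar u)$ and the Boolean cases are immediate from $\neg$-consistency and $\land$-consistency of types together with the matching clauses of $tr^\bullet$ (which on atoms and Booleans coincides with $tr$). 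For a dependence atom $D_V u$: $H(M),r\models D_V u$ iff $D_V u \in \Delta$ should follow because $R$ produces a model whose dependence structure is dictated by the types — here one uses that $tr^\bullet(D_V U) = R^{V,U}(\bar v)$ and that $\mathfrak{M}_M^\psi$ being a type model encodes the dependence closure $D^\Delta_V$ correctly; the Projection and Transitivity constraints on types (verified in Lemma~\ref{lem:setup}) ensure consistency. For $\mathbb{E}_V\varphi$: the ``only if'' direction uses the \textbf{witness} property of type models (and the fact that $H(M)$ realizes all types of $\mathfrak{M}_M^\psi$ with the right $\sim_{D^\Delta_V}$-links), while the ``if'' direction — going from a witnessing assignment in $H(M)$ back to membership $\mathbb{E}_V\varphi\in\Delta$ — uses that the bisimulation underlying $R$ respects the $V^{s,t}$-agreement and dependence-closedness conditions, together with $\mathbb{E}$-consistency of $\Delta$.

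Honestly, the cleanest route may be to avoid re-proving the type-model truth lemma from scratch and instead cite that $R(\mathfrak{M}_M^\psi)$ ``realizes precisely the types in $\mathfrak{M}_M^\psi$'' (Theorem~\ref{thm:unraveling}) in the strong sense that at the assignment realizing $\Delta$, an LFD formula $\varphi\in Cl(\psi)$ is true iff $\varphi\in\Delta$ — if Theorem~\ref{thm:unraveling}'s statement about the dependence bisimulation onto $\mathfrak{M}$ is understood to deliver exactly this, then the proof collapses to: $H(M),H(s)\models\varphi$ iff $\varphi\in\textrm{type}^\psi(M,s)$ iff $M,s\models tr^\bullet(\varphi)$, the second step being the definition of $\textrm{type}^\psi$ and the first being Theorem~\ref{thm:unraveling} applied to the type model shown to be well-formed in Lemma~\ref{lem:setup}.

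\textbf{Main obstacle.} The delicate point is the $\mathbb{E}_V$ case, specifically matching the \emph{dependence-closure} bookkeeping: the witness clause of type models requires $\Delta\sim_{D^\Delta_V}\Delta'$ (not merely $\Delta\sim_V\Delta'$), and correspondingly the LFD semantics of $\mathbb{E}_V$ on $H(M)$ must line up with the bisimulation conditions (i)--(iii) in the definition of dependence bisimulation that relate $R(\mathfrak{M})$ to $\mathfrak{M}$. One must check that the $tr^\bullet$ clause for $\mathbb{E}_V$ (which quantifies existentially over the complement of $V$ subject to $A$) correctly transports across $R$ — and this is precisely where the third conjunct of $setup(\psi)$ (the ``transfer'' condition) was needed in Lemma~\ref{lem:setup}, so the work has largely been front-loaded there; what remains is to make sure no gap opens between ``$\Delta\sim_V\Delta'$'' as delivered by a bare existential witness and ``$\Delta\sim_{D^\Delta_V}\Delta'$'' as demanded by the type-model machinery, i.e. that $H(s)$ can indeed be chosen coherently so the induction goes through uniformly at every assignment, not just at the distinguished one.
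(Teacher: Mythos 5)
Your ``cleanest route'' is exactly the paper's own proof: it simply observes that $H(s)$ realizes $\textrm{type}^\psi(M,s)$ in $H(M)=R(\mathfrak{M}^\psi_M)$ (by Theorem~\ref{thm:unraveling} applied to the type model delivered by Lemma~\ref{lem:setup}), so $H(M),H(s)\models\varphi$ iff $\varphi\in\textrm{type}^\psi(M,s)$ iff $M,s\models tr^\bullet(\varphi)$. The inductive truth lemma you sketch in your first paragraphs is not redone in the paper --- it is exactly the content delegated to Theorem~\ref{thm:unraveling} --- so your proposal is correct and takes essentially the same approach.
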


\begin{proof}
If $H(M),H(s)\models\varphi$, then by the definition of $H(s)$, we have that
$\varphi\in type^{\psi}(M,s)$, i.e., $M,s\models tr^\bullet(\varphi)$.
Conversely, if $M,s\models tr^\bullet(\varphi)$, then $\varphi\in type^{\psi}(M,s)$. 
Hence, again by the definition of $H(s)$, we have that $H(M),H(s)\models\varphi$. \end{proof}

\begin{corollary}\label{cor:entailment}
\begin{enumerate}
\item An LFD formula $\psi$ is satisfiable  iff its GF-translation $\sigma(\psi)$ is satisfiable.
\item An LFD  entailment $\varphi\models\psi$  is valid iff the GF-entailment
   $setup(\varphi)\land tr^\bullet(\varphi)\land A(v_1, \ldots, v_n)\models setup(\psi)\to tr^\bullet(\psi)$ is valid.
\end{enumerate}
\end{corollary}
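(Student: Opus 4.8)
\textbf{Proof plan for Corollary~\ref{cor:entailment}.}
The plan is to derive both parts from the two correctness theorems already established in the two directions, namely Theorem~\ref{thm:distinguished-correctness} (with Theorem~\ref{thm:gf-bisimilar-distinguished} and Corollary~\ref{cor:tau}) going from GF to LFD, and Theorem~\ref{thm:lfd-to-gf-correctness} together with Lemma~\ref{lem:setup} and Theorem~\ref{thm:unraveling} going from LFD to GF. For part (1), I would argue the two implications separately. If $\psi$ is LFD-satisfiable, take a dependence model $\mathbb{N}$ and assignment $s$ with $\mathbb{N},s\models\psi$; by Theorem~\ref{thm:unraveling} (and the remark that $R(\cdot)$ produces a model realizing exactly the types of a type model) it suffices to realize $\psi$ in a type model, and then pass to the standard model $M:=R^{-1}$-style witness — more directly, apply $T^{-1}$/the inverse of the unravelling to get a standard model $M$ over $\widehat{\mathbf S}$ with $M\models setup(\psi)$ (this is where Lemma~\ref{lem:setup} runs in reverse: a dependence model built from a type model induces a standard model satisfying $setup(\psi)$, by interpreting each $R^{X,Y}$ according to the dependence atoms) and an assignment $s$ with $M,s\models A(\bar v)\wedge tr^\bullet(\psi)$, hence $M,s\models\sigma(\psi)$. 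For the converse, if $M\models\sigma(\psi)$ at some $s$, then $M\models setup(\psi)$ and $M,s\models A(\bar v)\wedge tr^\bullet(\psi)$; apply the $H$-construction to obtain the dependence model $H(M)=R(\mathfrak M_M^\psi)$ (well-defined by Lemma~\ref{lem:setup} and Theorem~\ref{thm:unraveling}) and the assignment $H(s)$, and invoke Theorem~\ref{thm:lfd-to-gf-correctness} to conclude $H(M),H(s)\models\psi$.

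For part (2), I would reduce entailment to satisfiability in the usual way and then invoke part (1). The LFD entailment $\varphi\models\psi$ fails iff $\varphi\wedge\sim\!\psi$ is LFD-satisfiable (here $\sim$ is the single-negation operator from the definition of $Cl$, so that $\varphi\wedge\sim\!\psi$ is again an LFD formula), which by part (1) holds iff $\sigma(\varphi\wedge\sim\!\psi)$ is GF-satisfiable. It then remains to check that $\sigma(\varphi\wedge\sim\!\psi)$ is (up to logical equivalence) exactly the GF formula $setup(\varphi)\wedge tr^\bullet(\varphi)\wedge A(\bar v)\wedge setup(\psi)\wedge\neg tr^\bullet(\psi)$, i.e.\ the negation of the claimed entailment. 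This unwinds from the definition of $\sigma$: $tr^\bullet$ commutes with $\wedge$ and with single negation up to equivalence, and $setup(\varphi\wedge\sim\!\psi)$ is the conjunction of the $setup$-axioms over the common closure $Cl(\varphi\wedge\sim\!\psi)$, which one checks equals (or GF-equivalently refines, in a way that does not affect satisfiability jointly with $tr^\bullet$) $setup(\varphi)\wedge setup(\psi)$, since $Cl(\varphi\wedge\sim\!\psi)=Cl(\varphi)\cup Cl(\psi)$ and the $setup$-axioms are generated uniformly from the closure. Chaining these equivalences gives: the GF entailment is valid iff $\sigma(\varphi\wedge\sim\!\psi)$ is unsatisfiable iff $\varphi\wedge\sim\!\psi$ is LFD-unsatisfiable iff $\varphi\models\psi$.

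The step I expect to be the main obstacle is the reverse direction of part (1): establishing that an arbitrary LFD-satisfiable $\psi$ forces the existence of a standard $\widehat{\mathbf S}$-model satisfying $setup(\psi)$. The excerpt gives us $H$ only in one direction (standard model $\to$ type model $\to$ dependence model); for soundness of $\sigma$ we need the converse passage, which I would handle by taking the type model $\mathfrak M$ induced by the satisfying dependence model on the closure $Cl(\psi)$ (using the standard ``realized types'' construction, and the fact that a real dependence model yields a genuine LFD type model — this is essentially the soundness half of the type-model semantics behind Theorem~\ref{thm:unraveling}), and then interpreting $A$ as the team's projection (as in $T^{-1}$) and each $R^{X,Y}$ by the dependence atoms $D_X Y$; verifying $M\models setup(\psi)$ then amounts to checking the Projection, Transitivity and Transfer axioms, which hold because they are sound LFD principles. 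Once this bridge is in place, everything else is bookkeeping: commuting $tr^\bullet$ with the Booleans, and the observation that $Cl(\cdot)$ and hence $setup(\cdot)$ behave monotonically under taking conjunctions and single negations.
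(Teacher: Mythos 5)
Your proposal is correct and follows essentially the same route as the paper: the forward direction of (1) by expanding the satisfying dependence model to an $\widehat{\mathbf{S}}$-model with $A$ and the $R^{X,Y}$ given their intended interpretations (so that $setup(\psi)$ holds by soundness of Projection/Transitivity/Transfer), the converse via $H$ and Theorem~\ref{thm:lfd-to-gf-correctness}, and (2) by reducing to (1) using that $setup(\varphi\land\neg\psi)$ is equivalent to $setup(\varphi)\land setup(\psi)$ (since non-decomposable members of $Cl(\varphi\land\neg\psi)$ come from $Cl(\varphi)$ or $Cl(\psi)$) and that $tr^\bullet$ commutes with the Booleans. The brief detour through type models and ``inverse unravelling'' in your first paragraph is unnecessary, but you correctly settle on the direct expansion, which is exactly what the paper does.
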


\begin{proof}
From left to right, if $(M,A),s\models\psi$, then let $M'$ be the $\widehat{\mathbf{S}}$-expansion
of $M$ obtained by setting $A^{M'}$ as $\{(s(v_1), \ldots, s(v_n))\mid s\in A\}$, and 
each $(R^{V,U})^{M'}=\{s(\bar{v}) \mid (M,A),s\models D_VU(\bar{v})\}$. A straightforward
formula induction shows that, for all LFD formulas $\varphi$, $(M,A),s\models\varphi$ iff
$M',s\models A(v_1, \ldots, v_n)\land tr^{\bullet}(\varphi)$.
Furthermore, $M'\models setup(\psi)$ since all auxiliary predicates have their intended interpretation.
Therefore, we have that $M',s\models\psi$.
The converse direction follows immediately from Theorem~\ref{thm:lfd-to-gf-correctness}.

(ii) follows from (i), because (as careful inspection shows) 
$setup(\varphi\land\neg\psi)$ is logically equivalent to $setup(\varphi)\land setup(\psi)$
(to see this, note that, if $\varphi\land\neg\psi$ has a subformula that is 
non-decomposible, then the latter must be a subformula of $\varphi$ or of $\psi$)
and the fact that  $tr^\bullet$  commutes with the Boolean connectives.
\end{proof}

\section{Transfer results from our translations}

In addition to shedding light on the relationship between LFD and GF, 
our translations can also be used to
transfer results from GF to LFD and vice versa. 
We  give  several examples connecting  known results, but also a new result, viz. a complete complexity analysis of LFD.

As a warming up, we reprove a known result, namely the finite model property for GF~\cite{Graedelguards},
Recall that a logic has the finite model property (FMP) if every
satisfiable formula has a finite model.
It has been long known that GF has the FMP~\cite{Graedelguards},
and FMP was only recently shown to hold for LFD~\cite{Raoul}. Therefore, 
the following transfer fact does not
give us a new result, but it does provide an illustration of the value of our translations.

\begin{theorem}[FMP for GF]
GF has the finite model property.
\end{theorem}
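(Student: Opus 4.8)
The plan is to derive the finite model property for GF as a transfer consequence of the finite model property for LFD, using the satisfiability-preserving translations established in Sections 3 and 4. The key point is that both directions of translation, together with the model transformations that accompany them, interact well enough with finiteness that a finite witness on one side yields a finite witness on the other.

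\medskip

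\noindent\textbf{Step 1: Reduce GF-satisfiability to LFD-satisfiability.} Let $\varphi$ be a satisfiable GF-sentence. Choosing $V_{LFD}$ to have at least as many variables as the maximum arity of relations in the signature of $\varphi$, Corollary~\ref{cor:tau} gives a map $\rho\colon\free(\varphi)\to V_{LFD}$ such that the LFD-formula $\tau_\rho(\varphi)$ is satisfiable in some dependence model.

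\medskip

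\noindent\textbf{Step 2: Apply FMP for LFD.} By the finite model property for LFD~\cite{Raoul}, $\tau_\rho(\varphi)$ is satisfiable in a \emph{finite} dependence model $\mathbb{M}=(M,A)$ under some assignment $s\in A$; here finiteness means $dom(M)$ is finite (and hence $A$, as a set of functions $V_{LFD}\to dom(M)$ with $V_{LFD}$ finite, is finite as well).

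\medskip

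\noindent\textbf{Step 3: Push the finite model back through the model transformations.} Proposition~\ref{prop:distinguished} turns $\mathbb{M}$ into a dependence-bisimilar \emph{distinguished} dependence model $\mathbb{M}^d$; the construction there replaces each domain element $m$ occurring in an admissible assignment by finitely many pairs $(v,m)$, so $\mathbb{M}^d$ is still finite, and $s^d\in A^d$ still satisfies $\tau_\rho(\varphi)$ since dependence bisimulations preserve LFD-truth. Now apply the transformation $G$: by Theorem~\ref{thm:distinguished-correctness}, $G(\mathbb{M}^d),\,s^d\circ\rho\models\varphi$. Since $G$ does not enlarge the domain ($dom(G(\mathbb{M}^d))=dom(M^d)$, only predicate interpretations are shrunk), $G(\mathbb{M}^d)$ is a finite standard model, and as $\varphi$ is a sentence we conclude $G(\mathbb{M}^d)\models\varphi$. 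Hence $\varphi$ has a finite model, as required.

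\medskip

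\noindent The main obstacle is simply bookkeeping about finiteness: one must check that each of the three model operations in play --- the distinguishing construction $(\cdot)^d$, and the projection $G$, as well as the passage from a pointed model to satisfaction of a sentence --- preserves finiteness of the underlying domain. Each is immediate from the explicit definitions (the distinguished model has domain $\bigcup_{s\in A}s^d[V_{LFD}]$, a finite union of finite sets; $G$ leaves the domain untouched), so there is no real difficulty, only the need to be careful that ``finite model'' is read as ``finite domain'' uniformly on both the GF side and the dependence-model side. One should also note that the reduction of Corollary~\ref{cor:tau} requires $|V_{LFD}|$ to be large enough relative to the signature of $\varphi$, which is harmless since that signature is finite for any fixed formula.
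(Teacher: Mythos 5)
Your proof is correct and follows essentially the same route as the paper: reduce to LFD-satisfiability via Corollary~\ref{cor:tau}, invoke the finite model property of LFD, pass to a finite distinguished model via Proposition~\ref{prop:distinguished}, and project back with $G$ using Theorem~\ref{thm:distinguished-correctness}. The extra bookkeeping you supply about finiteness being preserved by $(\cdot)^d$ and $G$ is exactly what the paper's proof tacitly relies on.
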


\begin{proof}
Let $\varphi$ be a satisfiable guarded formula. By corollary~\ref{cor:tau}, since $\varphi$ is satisfiable on standard models, it follows that $\tau_{\rho}(\varphi)$ is satisfiable on dependence models, for some $\rho:\free(\varphi)\to V_{LFD}$ (for  $V_{LFD}$ sufficiently large).
By the finite model property of LFD \cite{Raoul}, there is a 
finite dependence model $\mathbb{M}$ satisfying $\tau_{\rho}(\varphi)$ at some $s\in A$. We can take $\mathbb{M}$ to be distinguished by Proposition~\ref{prop:distinguished} (which preserves finiteness). In fact, inspection of the proof in \cite{Raoul} shows that its construction via Herwig's theorem already ensures distinguishedness. By Theorem~\ref{thm:distinguished-correctness}, then, $G(\mathbb{M})$ is a finite model of $\varphi$.
\end{proof}

\begin{theorem}[Complexity of LFD satisfiability]\label{thm:sat-transfer}
For a finite set of LFD-variables $V_{LFD}$, 
the satisfiability problem for LFD-formulas in $V_{LFD}$ is ExpTime-complete.
The same problem is 2ExpTime-complete if $V_{LFD}$ is considered as part of the input.
\end{theorem}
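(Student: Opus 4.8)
The plan is to derive both complexity bounds by combining the two satisfiability-preserving translations established above with the known complexity of the Guarded Fragment, due to Gr\"adel: GF-satisfiability is \textsc{2ExpTime}-complete in general, and drops to \textsc{ExpTime}-complete when the arity of the signature (equivalently, the width, i.e.\ the number of variables per formula) is bounded by a constant. I will treat the upper bounds and the lower bounds separately.

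For the upper bounds, I would use the translation $\sigma$ from Section~4. Given an LFD-formula $\psi$ over a fixed finite $V_{LFD}$ with $n = |V_{LFD}|$, Corollary~\ref{cor:entailment}(1) tells us $\psi$ is satisfiable iff $\sigma(\psi)$ is satisfiable in GF. The key observation is to bound the size and width of $\sigma(\psi)$. The component $tr^\bullet(\psi)$ has size polynomial in $|\psi|$ for fixed $n$ (each dependence atom becomes a single relational atom, each $\mathbb{E}_V$ a block of $\leq n$ quantifiers). The $setup(\psi)$ conjunction ranges over triples $V,U,W \subseteq V_{LFD}$ and over $\xi \in Cl(\psi)$; for fixed $n$ the number of subsets is a constant, so $setup(\psi)$ has size polynomial in $|\psi|$, and every conjunct uses at most $n$ distinct variables. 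Hence $\sigma(\psi)$ is a GF-formula of size polynomial in $|\psi|$ and of width bounded by the constant $n$; by Gr\"adel's bounded-arity result, testing its satisfiability is in \textsc{ExpTime}, giving the \textsc{ExpTime} upper bound for fixed $V_{LFD}$. When $V_{LFD}$ is part of the input, the number of subsets of $V_{LFD}$ is exponential in $n$, so $setup(\psi)$ — and hence $\sigma(\psi)$ — may have size exponential in $n$ (but still with a finite GF-formula), and its width is $n$, which is now unbounded; Gr\"adel's general \textsc{2ExpTime} bound applied to a formula of exponential size gives a doubly-exponential, i.e.\ still \textsc{2ExpTime}, time bound, yielding the \textsc{2ExpTime} upper bound.

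For the lower bounds, I would go in the other direction, using the compositional translation $\tau$ from Section~3 together with Corollary~\ref{cor:tau}, which reduces GF-satisfiability to LFD-satisfiability (choosing $V_{LFD}$ large enough relative to the maximum arity of the signature, and existentially quantifying over the finitely many maps $\rho : \free(\varphi) \to V_{LFD}$). For the \textsc{2ExpTime}-hardness with $V_{LFD}$ part of the input: starting from a GF-formula $\varphi$ in a signature of arity $k$, we need $|V_{LFD}| \geq k$, and then $\varphi$ is satisfiable iff one of the (finitely many, at most $|V_{LFD}|^{|\free(\varphi)|}$) formulas $\tau_\rho(\varphi)$ is LFD-satisfiable; this is a polynomial-time many-one reduction to a disjunction of LFD-instances (or, combining the disjuncts with a fresh propositional structure, to a single instance), so the \textsc{2ExpTime}-hardness of GF transfers to LFD with $V_{LFD}$ in the input. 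For the \textsc{ExpTime}-hardness with $V_{LFD}$ fixed: here one uses that GF-satisfiability is already \textsc{ExpTime}-hard for signatures of bounded arity (indeed, guarded formulas over a signature with a single binary relation already capture modal logic $K$, whose satisfiability is \textsc{PSpace}-hard, and bounded-width GF is in fact \textsc{ExpTime}-complete); fixing $V_{LFD}$ to have size at least that bounded arity, $\tau$ again gives a polynomial reduction, transferring \textsc{ExpTime}-hardness to LFD over that fixed $V_{LFD}$.

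The main obstacle I anticipate is the bookkeeping on the lower-bound side: $\tau_\rho(\varphi)$ is only satisfiability-equivalent to $\varphi$ after taking a disjunction over all $\rho$, so I must either argue that a disjunction of polynomially-many LFD-instances reduces to a single LFD-instance (e.g.\ by adding a guard-like selector, which requires a little care since LFD has no propositional atoms but does have nullary-style tricks via $\mathbb{E}_\emptyset$ and fresh unary predicates read at a fixed variable), or simply observe that \textsc{ExpTime} and \textsc{2ExpTime} are closed under polynomial disjunctive truth-table reductions, so a disjunctive reduction suffices for hardness. A secondary point to verify carefully is that when $V_{LFD}$ is part of the input the blow-up in $\sigma(\psi)$ is only single-exponential (it is: the number of subsets of $V_{LFD}$ and of triples thereof is $2^{O(n)}$ and $2^{O(n)}$ respectively, and $|Cl(\psi)|$ is linear in $|\psi|$), so that composing with Gr\"adel's \textsc{2ExpTime} algorithm stays within \textsc{2ExpTime} rather than leaking into \textsc{3ExpTime}. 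Once these two points are nailed down, the four bounds (\textsc{ExpTime} upper/lower for fixed $V_{LFD}$, \textsc{2ExpTime} upper/lower for $V_{LFD}$ in the input) follow immediately.
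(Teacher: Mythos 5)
Your overall strategy coincides with the paper's (upper bounds via $\sigma$ into GF, lower bounds via $\tau$ from GF), but there are two genuine gaps, one on each side. On the upper-bound side, your argument for the case where $V_{LFD}$ is part of the input does not go through as stated: running a generic $2^{2^{p(m)}}$-time algorithm on an input of size $m = O(|\psi|)\cdot 2^{O(n)}$ yields $2^{2^{p(2^{O(n)})}}$, which is \emph{triply} exponential in $n$, and merely checking that the blow-up of $\sigma(\psi)$ is single-exponential (which you do) does not prevent this leak. What is needed, and what the paper invokes, is the refined running time for GF-satisfiability: a GF-formula $\varphi$ with $k$ variables can be decided in time $2^{O(|\varphi|\cdot k^k)}$, i.e.\ single-exponential in the formula length, with the double exponential confined to the number of variables. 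Since $\sigma$ does not increase the number of variables beyond $n=|V_{LFD}|$, the composed bound is $2^{O(|\psi|\cdot 2^{O(n)}\cdot n^n)}$, which is genuinely doubly exponential in the input size.

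On the lower-bound side, you have misidentified the obstacle. The outer disjunction over the maps $\rho:\free(\varphi)\to V_{LFD}$ is harmless (for a sentence there is only one such map, and in any case hardness is preserved under disjunctive reductions, as you note). The real problem is that $\tau_\rho$ itself is \emph{exponential in the nesting depth of guarded quantification}: the clause for $\exists\bar{y}$ produces a disjunction over all extensions $\rho'$ of $\rho$, each disjunct containing a recursive translation of the body, so these branching factors multiply under nesting and $\tau_\rho(\varphi)$ can have size $|V_{LFD}|^{\Omega(d)}$ for nesting depth $d$. Hence $\tau_\rho$ is not a polynomial-time reduction on arbitrary GF input, and your hardness transfer fails as written. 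The paper repairs this by first converting the GF-formula, in polynomial time and without increasing the number of variables, into Gr\"adel's Scott normal form, which uses only two levels of polyadic guarded quantification; on such formulas $\tau_\rho$ is polynomial, and the ExpTime-hardness (respectively 2ExpTime-hardness) of GF with a bounded (respectively unbounded) number of variables then transfers to LFD exactly as you intend.
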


\begin{proof}
Upper bound: by reduction to GF.
The satisfiability problem for GF is 2ExpTime-complete~\cite{Graedelguards}. More precisely,
the satisfiability of a GF formula with $k$ FO-variables can be checked in time $2^{O(|\varphi|\cdot k^k)}$ (cf.~\cite{MarxVenema}).
A careful analysis of our translation from LFD to GF shows 
it can be performed in time $O(|\varphi|)\cdot 2^{O(k)}$, where $k=|V_{LFD}|$. Indeed,  
computing the modified first-order translation $tr^{\bullet}(\varphi)$ can be done in linear time, so only the set-up part brings all the complexity. The formula $setup(\varphi)$ consist of $2^k$ many conjuncts encoding a couple of projection axiom instances on the proxy-dependence atoms $R^{X,y}$ as well as $2^{3k}$ many conjuncts encoding a transitivity axiom instance and  $O(|\varphi|)\cdot 2^{2k}$ many conjuncts encoding a transfer axiom instance for some `indecomposable' formulas in the closure $Cl(\varphi)$. Putting this together, we obtain that the satisfiability problem for LFD is in ExpTime
if $V_{LFD}$ is treated as fixed in the complexity analysis, and that it is in 
2ExpTime if $V_{LFD}$ is part of the input (due to the fact that, even though the translation from  LFD to GF is exponential, it does not increase the number of variables).

Lower bound: by reduction from GF. 
Observe that our translation from GF to LFD is exponential due to the disjunction in the translation clause for the guarded quantifiers. Exponential in the nesting depth of (polyadic) guarded-quantifications, to be precise. 
For this reason, it is not immediately clear that  complexity upper bounds for GF transfer to LFD as is. However, we can make use of a
result from~\cite{Graedelguards}, according to which there is a 
satisfiability-preserving translation from GF-formulas to a ``Scott-normal form'' (which can be performed in polynomial time and without increasing the number of variables), where the normal form uses only two levels of polyadic guarded quantification. Our translation is 
polynomial when applied to GF-formulas in this normal form.
As we mentioned earlier, the satisfiability problem for GF-formulas
with a bounded number of variables is 
ExpTime-complete, and with an unbounded number of variables it is 
2ExpTime-complete. Since our translation from GF to LFD does not
require more LFD variables than the number of variables in the input
formula, this establishes our lower bounds.
\end{proof}

In fact, we can improve this result a little.
By the \emph{monadic fragment} of LFD we mean LFD-formulas that only use
unary relation symbols.

\begin{theorem}[Complexity of the monadic fragment of LFD]
Fix a finite set $V_{LFD}$ of LFD-variables with $|V_{LFD}|\geq 2$
The satisfiability problem for the monadic fragment of LFD with $V_{LFD}$ is ExpTime-complete.
\end{theorem}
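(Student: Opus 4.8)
The plan is to establish the theorem by proving matching upper and lower bounds. For the \textbf{ExpTime upper bound}, I would simply observe that the monadic fragment of LFD is a special case of full LFD, and full LFD with a fixed finite $V_{LFD}$ is already ExpTime-complete by Theorem~\ref{thm:sat-transfer}; hence the upper bound is immediate. So the real content is the \textbf{ExpTime lower bound}, which must be established specifically for the monadic (unary-predicate-only) fragment, where the translation-to-GF machinery no longer obviously bites since monadic GF itself is not ExpTime-hard.

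For the lower bound, the natural approach is a direct reduction from a known ExpTime-complete problem — I would reduce from the global consequence problem (or satisfiability) for multi-modal logic $K$ with a fixed number of modalities, or alternatively from the non-emptiness problem for alternating linear-space Turing machines, though the modal route is cleaner. The key idea is that with $|V_{LFD}|\geq 2$, say $V_{LFD}=\{x,y\}$, the modality $\mathbb{E}_{\{x\}}$ acts as a box/diamond-style operator that shifts the value of $y$ while keeping $x$ fixed, and the dependence atoms $D_{\{x\}}y$ together with unary predicates $P(x)$, $P(y)$ give enough expressive power to encode a binary accessibility relation and propositional labels purely with monadic atoms. Concretely, one thinks of an admissible assignment $s$ as a ``world'' whose identity is recorded by $s(x)$, with $P(x)$ encoding the propositional type; transitions to successor worlds are simulated by moving within the team via $\mathbb{E}_{\emptyset}$ or $\mathbb{E}_{\{x\}}$, and a fresh unary predicate (say $Q$) applied to $y$ versus $x$ encodes edge information. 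The dependence atoms are used to force functionality/determinacy constraints that pin down the intended structure, exactly as dependence atoms are used in the axiomatization of LFD. One then shows a satisfiable modal formula yields a satisfying dependence model and conversely, with a polynomial-size LFD formula built over only unary predicates.

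The \textbf{main obstacle} is the faithfulness of this encoding: one must verify that arbitrary (non-distinguished, possibly badly-behaved) dependence models satisfying the constructed monadic LFD formula genuinely correspond to Kripke models for the source modal formula, rather than spurious structures exploiting the ``gaps'' in the team. This is where the dependence atoms and the limited supply of variables must be leveraged carefully — with only two variables, every relevant piece of information about a world must be coded into the pair $(s(x),s(y))$, and one has to rule out assignments that ``cheat'' by using values in unintended ways. I would handle this by first passing to a distinguished model via Proposition~\ref{prop:distinguished} (which preserves satisfiability and monadicity), so that objects and variables line up canonically, and then reading off the Kripke model directly from the team structure. A secondary, more routine obstacle is bookkeeping: ensuring the encoding uses only unary predicates (no binary guard predicates), which forces one to simulate accessibility relations entirely through the team geometry and the $\mathbb{E}_V$ modalities rather than through relational atoms — the restriction $|V_{LFD}|\geq 2$ is exactly what makes this possible, since with a single variable the $\mathbb{E}_V$ modality degenerates and no nontrivial transition structure can be expressed.
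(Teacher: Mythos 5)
Your upper bound is exactly the paper's (immediate from Theorem~\ref{thm:sat-transfer}), but your lower bound has a genuine gap. The modalities $\mathbb{E}_V$ of LFD are \emph{S5-type} modalities: the relation $s =_V t$ on the team is an equivalence relation, so $\mathbb{E}_{\{x\}}$ is reflexive, symmetric and transitive. Consequently a single $\mathbb{E}_V$ cannot play the role of a $\textbf{K}$-diamond, and your plan to ``encode a binary accessibility relation'' and simulate transitions of a $\textbf{K}$-model directly runs into two obstacles you do not resolve. First, a unary predicate $Q$ applied to $y$ (or to $x$) only sees a single value, so it cannot record edge information about the pair $(s(x),s(y))$. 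Second, even if you let the team itself serve as the edge relation (world $a$ sees $b$ iff some admissible assignment has $x\mapsto a$, $y\mapsto b$), evaluating the target formula at the successor world requires jumping to an assignment $t$ with $t(x)=s(y)$, and LFD without equality atoms cannot express this cross-variable identification (and LFD with equality is undecidable). Passing to a distinguished model does not help here; it only separates the ranges of $x$ and $y$, which if anything makes the identification $t(x)=s(y)$ impossible.

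The missing idea, which is how the paper proceeds, is to reduce not from $\textbf{K}$ but from $\textbf{K}+U$ via the Halpern--Vardi encoding into $\textbf{S5}_2+U$: replace every $\Diamond$ by the composition $\Diamond_1\Diamond_2$ of two S5 diamonds. Since $\textbf{S5}_2+U$ consists precisely of equivalence-relation modalities, it embeds directly into monadic LFD by sending proposition letters to unary predicates, $\Diamond_1$ to $\mathbb{E}_x$, $\Diamond_2$ to $\mathbb{E}_y$, and the global modality to $\mathbb{E}_\emptyset$ --- no dependence atoms and no faithfulness argument about ``cheating'' assignments are needed, because the semantics of $\mathbb{E}_V$ on an arbitrary dependence model \emph{is} an $\textbf{S5}_2+U$ Kripke semantics. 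Your instinct that $|V_{LFD}|\geq 2$ is what makes a nontrivial transition structure expressible is correct, but the composition of the two S5 modalities, not the team geometry plus dependence atoms, is what carries the ExpTime-hardness.
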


\begin{proof}[sketch]
The upper bound follows from Theorem~\ref{thm:sat-transfer}.
For the lower bound, we reduce from the satisfiability problem
of the basic modal logic $\textbf{K}$ extended with the global modality
(which we will denote as $\textbf{K}+U$).
This logic is known to be ExpTime-complete (cf.~\cite{BRV}).
By a simple encoding trick due to Halpern and Vardi~\cite{HalpernVardi1989}, 
the satisfiability problem for $\textbf{K}+U$ reduces to the satisfiability 
problem for the multi-modal logic $\textbf{S5}_2 + U$ 
(where $\textbf{S5}_2$ is the bi-modal ``fusion'' logic
that has two S5 modality without interaction axioms). The coding trick
in question consists of replacing every occurrence of $\Diamond$ by 
$\Diamond_1\Diamond_2$ where $\Diamond_1$ and $\Diamond_2$ are the two $\textbf{S5}$-modalities (cf.~\cite{HalpernVardi1989} for the proof
that this preserves satisfiability). 
The satisfiability problem for $\textbf{S5}_2+U$, 
finally, embeds straightforwardly into the monadic fragment of LFD
(even without using dependence atoms): each proposition letter becomes
a unary predicate, $\Diamond_1$ becomes $\mathbb{E}_x$ and $\Diamond_2$ becomes $\mathbb{E}_y$
for $x,y\in V_{LFD}$ two distinct LFD-variables, whereas the global modality
(in its existential form) becomes $\mathbb{E}_{\emptyset}$.
\end{proof}

\begin{remark}
Our translation from GF to (the dependence-atom-free fragment of)
LFD can also be composed with the $tr$ translation from LFD to GF
to obtain a SAT-reduction from GF to its ``universal guard'' fragment
(i.e., GF-formulas with a single guard predicate that occurs only in 
guard position).%
\footnote{In fact, to its 
fragment where all quantifiers are guarded by the same atom $A(v_1,\ldots,v_n)$.}
\end{remark}

As our final example, we prove that LFD has Craig interpolation, offering a model-theoretic alternative  to the sequent calculus-based proof  in \cite{LFD}.

\begin{theorem}[Craig interpolation for LFD]
For every valid LFD implication
$\models\varphi\to\psi$, there is an LFD formula $\vartheta$ such that
\begin{enumerate}
\item $\models\varphi\to\vartheta$, 
\item $\models\vartheta\to\psi$, and 
\item all relation symbols
occurring in $\vartheta$ occur both in $\varphi$ and in $\psi$.
\end{enumerate}
\end{theorem}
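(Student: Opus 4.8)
The plan is to derive Craig interpolation for LFD from Craig interpolation for GF, using the two translations $\sigma$ (LFD $\to$ GF) and $\tau$ (GF $\to$ LFD) developed in this paper, much as one transfers properties through a pair of mutually inverse-ish translations. First I would take a valid LFD implication $\models\varphi\to\psi$. By Corollary~\ref{cor:entailment}(ii), this is equivalent to the validity of the GF-entailment
\[
setup(\varphi)\wedge tr^\bullet(\varphi)\wedge A(\bar v)\ \models\ setup(\psi)\to tr^\bullet(\psi),
\]
i.e.\ $\models\bigl(setup(\varphi)\wedge tr^\bullet(\varphi)\wedge A(\bar v)\bigr)\to\bigl(setup(\psi)\to tr^\bullet(\psi)\bigr)$. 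Apply Craig interpolation for GF (which holds, \cite{HoogHK}) to obtain a GF-formula $\chi$ with $\models\bigl(setup(\varphi)\wedge tr^\bullet(\varphi)\wedge A(\bar v)\bigr)\to\chi$ and $\models\chi\to\bigl(setup(\psi)\to tr^\bullet(\psi)\bigr)$, whose relation symbols lie in the common signature of the two sides.

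The key difficulty is that $\chi$ is an arbitrary GF-formula over the \emph{extended} signature $\widehat{\mathbf S}$ — it may mention $A$ and the proxy-relations $R^{X,Y}$ — and we must convert it back into a genuine LFD-formula over $\mathbf S$ whose relation symbols occur in both $\varphi$ and $\psi$. The route I would take is: first observe that the common relation symbols of the two GF-formulas are exactly the common relation symbols of $\varphi$ and $\psi$ in $\mathbf S$, \emph{together with} the auxiliary symbols $A$ and those $R^{X,Y}$ that are shared — and by the design of $setup(\cdot)$ and $tr^\bullet(\cdot)$, the auxiliary symbols $A, R^{X,Y}$ always occur on both sides (they are introduced uniformly for every LFD formula over a fixed $V_{LFD}$). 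So one cannot simply read off signature-correctness for free; instead one should note that $A$ and $R^{X,Y}$ are \emph{logical} rather than \emph{non-logical} vocabulary here — they are fixed by $V_{LFD}$ and carry no content about $\varphi$ or $\psi$ — so their presence in $\chi$ is harmless. Then apply the translation $\tau$: since $\chi$ is a GF-formula, set $\vartheta' := \tau_\rho(\chi)$ for a suitable $\rho:\free(\chi)\to V_{LFD}$, obtaining an LFD-formula (over $\mathbf S$ enriched with the $A$-relation read as the team, and the $R^{X,Y}$ which must now be re-expressed). The crucial move is to replace each proxy atom back: because $setup$ forces the $R^{X,Y}$ to coincide with the genuine dependence atoms on admissible assignments, I would argue that within the scope of the $A$-guard (which $\tau$ and the LFD semantics enforce) every occurrence of $R^{X,Y}(\bar v)$ in $\tau_\rho(\chi)$ can be provably rewritten as the LFD dependence atom $D_X Y$, yielding a bona fide LFD-formula $\vartheta$ over $\mathbf S$; its remaining (genuine $\mathbf S$-)relation symbols occur in $\chi$, hence in both $\varphi$ and $\psi$.

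Finally I would verify the two implications $\models\varphi\to\vartheta$ and $\models\vartheta\to\psi$. For this, take any dependence model $(M,A),s\models\varphi$; expand to a standard $\widehat{\mathbf S}$-model $M'$ with $A$ and the $R^{X,Y}$ given their intended interpretations, as in the proof of Corollary~\ref{cor:entailment}(i), so that $M',s\models setup(\varphi)\wedge tr^\bullet(\varphi)\wedge A(\bar v)$; then $M',s\models\chi$; now pass to a distinguished dependence model bisimilar to the relevant restriction of $M'$ and invoke Theorem~\ref{thm:distinguished-correctness} together with the rewriting $R^{X,Y}\rightsquigarrow D_XY$ to conclude $(M,A),s\models\vartheta$. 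The reverse implication is symmetric, using the other half of the GF-interpolant together with $setup(\psi)$. The main obstacle I anticipate is precisely the bookkeeping around the auxiliary vocabulary: making rigorous that $A$ and the $R^{X,Y}$ may appear in the interpolant without violating the signature condition, and that the $\tau$-translate can be massaged back into the dependence-atom syntax — in other words, checking that the "almost-compositional" nature of $\sigma$ and the non-compositional type-model detour in its model operation do not obstruct the round-trip. A clean way to sidestep part of this is to work throughout with the dependence-atom-free fragment plus a single, fixed family of dependence atoms and treat $A, R^{X,Y}$ as part of a fixed logical signature common to all formulas, so that signature-restriction only ever bites on the $\mathbf S$-symbols, which is exactly what clause (3) demands.
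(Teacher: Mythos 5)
Your overall route is the same as the paper's: push the valid LFD implication through Corollary~\ref{cor:entailment} to a valid GF implication over $\widehat{\mathbf{S}}$, interpolate there, and pull the interpolant back with a suitably extended $\tau$. The genuine gap is at the interpolation step: you invoke ``Craig interpolation for GF (which holds)'', but GF does \emph{not} have the Craig interpolation property --- this is a well-known negative result. What does hold, and what the paper actually uses, is the \emph{weak} interpolation theorem for GF (\cite[Thm.~4.5]{InterpolGF}), whose conclusion only guarantees that the non-guard relation symbols of the interpolant are shared (here: everything except possibly $A$), and whose hypotheses must be checked: the paper verifies that in the translated implication all quantification is guarded by the single atom $A(v_1,\ldots,v_n)$, that the antecedent and the negation of the consequent are self-guarded, and that the free variables of the consequent are among those of the antecedent. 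Your observation that $A$ and the $R^{X,Y}$ occur on both sides does not rescue the step: the counterexamples to GF interpolation show that a shared-vocabulary interpolant can fail to exist even when the vocabularies largely overlap, so without the weak theorem (and its preconditions) you have not justified the existence of any interpolant $\chi$ at all.

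The rest of your plan matches the paper in substance. The back-translation is the extended $\tau_\rho$ with $\tau_\rho(A(\bar{x}))=\top$ or $\bot$ according to whether $\rho(\bar{x})$ is the canonical tuple $v_1,\ldots,v_n$, and $\tau_\rho(R^{U,V}(\bar{u}))=D_{\rho[U]}\rho[V]$; your ``rewrite $R^{X,Y}$ back to $D_XY$ inside the $A$-guard'' is this clause in disguise, but you should state it as a clause of $\tau$ rather than as a provable rewriting, since the atoms of $\chi$ need not occur under an $A$-guard syntactically --- correctness is instead established semantically, for distinguished models, against the expansion $\widehat{\mathbb{M}}$ in which $R^{V,U}$ is interpreted by the true dependence facts. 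You also need the free variables of $\chi$ to lie in $V_{LFD}$ so that $\rho$ can be taken to be the identity map; this is part of what the weak interpolation theorem delivers and is silently assumed in your construction of $\vartheta$.
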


\begin{proof} [sketch] Let $\varphi\to\psi$ be valid in LFD. By Corollary~\ref{cor:entailment}, the implication
$(setup(\varphi)\land tr^\bullet(\varphi)\land A(v_1, \ldots, v_n))\to ((setup(\psi)\land A(v_1, \ldots, v_n))\to tr^\bullet(\psi))$ is valid in GF.
Note that, in this GF formula, all quantification is guarded by $A$. 
Furthermore, the antecedent and the negation of the consequent are self-guarded (meaning that the 
free variables are guarded by an atomic conjunct), and the free variables
of the consequent form a subset of the free variables of the antecedent 
(viz.~$\{v_1, \ldots, v_n\}$).
Hence, by the weak Craig interpolation theorem~\cite[Thm.~4.5]{InterpolGF}, there is
a GF-formula $\chi$ such that
\begin{enumerate}
\item $\models (setup(\varphi)\land tr^\bullet(\varphi)\land A(v_1, \ldots, v_n))\to\chi$,
\item $\models\chi\to  ((setup(\psi)\land A(v_1, \ldots, v_n))\to tr^\bullet(\psi))$, and
\item All relation symbols occurring in $\chi$ except possibly for $A$, 
  occur both in $(setup(\varphi)\land tr^\bullet(\varphi))$ and in $ ((setup(\psi)\land A(v_1, \ldots, v_n))\to tr^\bullet(\psi))$.
  \item The free variables of $\chi$ belong to $V_{LFD}$.
\end{enumerate}
It suffices only to translate $\chi$ back to LFD. This is not entirely trivial, since
$\chi$ is now a GF-formula over the expanded signature $\widehat{\mathbf{S}}$.
That is, it will in general contain the  $A$ and $R^{V,U}$ relations. 
Fortunately, 
it turns out that the translation $\tau$ from GF to LFD that we gave in Section~\ref{sec:tau} 
can be extended in a straightforward way to the full signature $\widehat{\mathbf{S}}$, as 
follows:

\vspace{-3ex}
\[ \begin{array}{lll}
\tau_\rho(A(\bar{x})) &=& \text{$\top$ if $\rho(\bar{x})$ is precisely the sequence $v_1, \ldots, v_n$; $\bot$ otherwise}   \vspace{-1mm} \\[1em]
\tau_\rho(R^{U,V}(\bar{u})) &=& \text{$D_{\rho[U]}\rho[V]$}
\vspace{-2ex}
\end{array}\]
%

\vspace{1mm}

For any dependence model $\mathbb{M}=(M,A)$, let $\widehat{\mathbb{M}}$ be the 
standard model that is the $\widehat{\mathbf{S}}$-expansion of $M$ obtained in the
natural way (as we did in the proof of Corollary~\ref{cor:entailment}). Then,
it can be shown that, with the above modification to $\tau$, 

\vspace{1mm}

\begin{enumerate}
\item [(a)] For all distinguished $\mathbb{M}$ and admissible $s$, and for all
GF-formulas $\xi$ over $\widehat{\mathbf{S}}$, we have
 $\widehat{\mathbb{M}},s\circ\rho\models\xi$ iff $\mathbb{M},s\models \tau_\rho(\xi)$.  
\item[(b)] For every LFD formula $\xi$, $\tau_\rho(tr^{\bullet}(\xi))$ is equivalent to $\xi$
  over distinguished dependence models, and hence over all dependence models, if we take $\rho$ to be the identity map on $V_{LFD}$.
\end{enumerate}

\vspace{1mm}

Putting this all together, it follows that $\vartheta := \tau_\rho(\chi)$ (under the above modified translation function $\tau$, and with $\rho$ the identity map on $V_{LFD}$)
is a Craig interpolant for our original implication $\varphi\to\psi$ on distinguished dependence models, and hence, by Proposition~\ref{prop:distinguished}, on all dependence models.
\end{proof}

LFD  also has Craig interpolants that only use shared variables, \cite{LFD}. To obtain this further information, the preceding  analysis should be refined.


\section{Further Directions}\label{sec:discussion}

\paragraph{Language extensions} One natural question is if our results extend to richer languages than the ones considered here. Here is an example. 

The translation $\tau_{\rho}$ from GF into LFD presented in Section \ref{sec:tau} can be extended to deal with \emph{identity atoms} $x = y$ between GF variables $x, y$. It suffices to add  two clauses: (a) $\tau_{\rho}(x = y) = \top$ if $\rho(x) = \rho(y)$ [the mapping to LFD variables enforces identity throughout], (b) $\tau_{\rho}(x = y) = \bot$ if $\rho(x) \neq \rho(y)$ [the values will always be distinct in distinguished dependence models]. Thus GF(=), too, translates compositionally into the decidable logic LFD without identity.

Another natural extension from an LFD perspective is adding \emph{local independence atoms} $I_Xy$ saying that fixing the local values of the variables $X$ at assignment $s$ in the current dependence model puts no constraint on $y$: which can still take any value in its range on the $X$-restricted subset of assignments. Adding atoms $I_Xy$ to LFD results in an undecidable logic \cite{LFD}. Still, even from the original motivation for CRS, analyzing true FO quantifiers in terms of their independence behavior versus the dependence behavior of CRS quantifiers makes sense, suggesting a study of richer fragments of FO.\footnote{Other natural language extensions to explore would introduce \emph{fixed-point operators}, \cite{Graedelguards}, a device whose power has not yet been studied for the dependence logic LFD.}




\vspace{1ex}

\paragraph{Translation patterns} Our three translations show some general patterns. 

Translating LFD into the FO language via $tr(.)$ was compositional, while there was an inverse model transformation $T$ supporting the usual contravariant translation equivalence $M, s \models tr(\varphi)$ iff $T(M), s \models \varphi$. Thus there is  an adjunction between the maps $tr(.), T$ whose transfer of properties between models has been determined in an abstract setting in \cite{ChuJ}. In fact, $T$ was onto and bijective, which allows for additional transfer of properties between LFD and FO to the extent that LFD can be identified with a fragment of first-order logic. 

The compositional translation $\tau_{\rho}$ from GF into LFD also came with an inverse model transformation G supporting a contravariant equivalence, but this time, it only worked for the special classes of distinguished LFD models and correlated distinguished GF models. However, all models for LFD were LFD-bisimilar to these distinguished models, and an analogous result held for  GF. This relaxed notion of translation up to bisimulation for the relevant two languages seems an interesting generalization of the standard case which still facilitates a good deal of transfer.
We saw this for logical consequence, it also works straightforwardly for decidability, and we even saw how, putting together the translation with its  matching model transformation allowed us to transfer the finite model property.

Finally, our translation from LFD into GF was not entirely compositional, as we also needed to carry special conditions for LFD type models for the formula being translated. This amounts to translating from LFD on its ordinary models into GF on a special class of models satisfying a theory consisting of  special conditions for type models. This theory allowed for a transformation of its GF models into LFD type models, which can then, in a modular fashion, be represented as ordinary LFD dependence models.

Several general questions arise here. One is about the range of the third type of translation technique. We believe that it can deal quite generally with modal-style logics that have an effective `finite quasimodel property' plus a representation theorem.\footnote{As an illustration, consider the modal logic K4 which is complete for transitive models. We cannot translate this directly into GF since transitivity is not guarded. This  can be overcome by suitably translating into the guarded fixed-point logic $\mu$(GF), but a simpler solution is this. Translate the requirements on finite filtrations for K4 models, where in particular, if $\Box\varphi$ belongs to a type, then both $\Box\varphi, \varphi$ belong to accessible types. The result is a guarded description of filtration models.} We leave this matter for further exploration.\footnote{A further interesting issue would be an abstract general transfer analysis of the generalized types of translation that we have provided in this paper.}

\section{Conclusion}

The guarded fragment GF models  restricted quantification, the  modal dependence logic LFD models local dependence between variables. We have presented two new  translation results. One is a faithful compositional translation from GF into the dual CRS fragment of LFD, the other is an effective reduction from  full LFD with dependence atoms to GF. We then demonstrated a number of consequences for transfer of known properties between GF and LFD, and also derived some new results, such as a determination of the  computational complexity of satisfiability in LFD. In summary, local dependence and guarding are much closer as semantic notions than what may have been thought previously. Moreover, the  techniques that we introduce to prove our results may be of wider interest, and we provided some pointers in our final discussion.


\bibliographystyle{aiml22} 
\bibliography{aiml22.bib}

\begin{thebibliography}{10}
\expandafter\ifx\csname url\endcsname\relax
  \def\url#1{\texttt{#1}}\fi
\expandafter\ifx\csname urlprefix\endcsname\relax\def\urlprefix{URL }\fi
\newcommand{\enquote}[1]{``#1''}

\bibitem{Alice}
Abiteboul, S., R.~Hull and V.~Vianu, \enquote{Foundations of Databases,}
  Addison-Wesley, 1995.
\newline\urlprefix\url{http://webdam.inria.fr/Alice/}

\bibitem{ModLangBoundedFrag}
Andr{\'e}ka, H., I.~N{\'e}meti and J.~van Benthem, \emph{Modal languages and
  bounded fragments of predicate logic}, Journal of Philosophical Logic
  \textbf{27} (1998).

\bibitem{ABBN}
Andr{\'e}ka, H., J.~van Benthem, N.~Bezhanishvili and {\'I}.~N{\'e}meti,
  \enquote{Changing a Semantics: Opportunism or Courage?} Birkh{\"a}user
  Verlag, Basel, 2007 pp. 307--337.

\bibitem{LFD}
Baltag, A. and J.~van Benthem, \emph{A simple logic of functional dependence},
  J. Philos. Log. \textbf{50} (2021), pp.~939--1005.
\newline\urlprefix\url{https://doi.org/10.1007/s10992-020-09588-z}

\bibitem{GNbcs}
Barany, V., B.~ten Cate and L.~Segoufin, \emph{Guarded negation}, Journal of
  the ACM (JACM), Association for Computing Machinery  (2015), pp.~22.1--22:26.

\bibitem{BRV}
Blackburn, P., M.~de~Rijke and Y.~Venema, \enquote{Modal Logic,} Cambridge
  University Press, Cambridge UK, 2001.

\bibitem{Graedelguards}
Gr{\"a}del, E., \emph{On the restraining power of guards}, Journal of Symbolic
  Logic \textbf{64} (1998), pp.~1719--1742.

\bibitem{GrOt}
Gr{\"a}del, E. and M.~Otto, \enquote{Guarded Teams: The Horizontally Guarded
  Case,} Leibniz-Zentrum f{\"u}r Informatik, Dagstuhl, Germany, 2020 pp.
  22:1--22:17.

\bibitem{localdeps}
Gr{\"a}del, E. and P.~P{\"u}tzst{\"u}ck, \emph{Logics of dependence and
  independence: The local variants}, Journal of Logic and Computation
  \textbf{31} (2021), pp.~1690--1715.

\bibitem{HalpernVardi1989}
Halpern, J.~Y. and M.~Y. Vardi, \emph{The complexity of reasoning about
  knowledge and time. i. lower bounds}, Journal of Computer and System Sciences
  \textbf{38} (1989), pp.~195--237.
\newline\urlprefix\url{https://www.sciencedirect.com/science/article/pii/0022000089900391}

\bibitem{InterpolGF}
Hoogland, E. and M.~Marx, \emph{Interpolation and definability in guarded
  fragments}, Studia Logica \textbf{70} (2002), pp.~373--409.

\bibitem{Raoul}
Koudijs, R., \emph{Finite model property and bisimulation for lfd}, Electronic
  Proceedings in Theoretical Computer Science \textbf{346} (2021),
  pp.~166--178.
\newline\urlprefix\url{http://dx.doi.org/10.4204/EPTCS.346.11}

\bibitem{marxtaming}
Marx, M., \emph{Taming first order logic: Relating the semantic and the
  syntactic approach}.
\newline\urlprefix\url{https://festschriften.illc.uva.nl/j50/contri
  bs/marx/index.html}

\bibitem{MarxVenema}
Marx, M. and Y.~Venema, \enquote{Local Variations on a Loose Theme: Modal Logic
  and Decidability,} Springer Berlin Heidelberg, Berlin, Heidelberg, 2007 pp.
  371--429.

\bibitem{Nemeti86}
N{\'e}meti, {\'I}., \enquote{Free algebras and decidability in algebraic
  logic,} Doctoral Dissertation for the Hungarian Academy of Sciences, 1986,
  translation of the relevant chapter as: Decidability of weakened versions of
  first-order logic. In Logic Colloquium 92, Stanford, CSLI Pubications, 1995.
  pp.177--241.

\bibitem{Vaa}
V{\"a}{\"a}n{\"a}nen, J., \enquote{Dependence Logic,} Cambridge University
  Press, Cambridge UK, 2009.

\bibitem{ChuJ}
van Benthem, J., \emph{Information transfer across chu spaces}, Logic Journal
  of the IGPL \textbf{6} (2000), pp.~719--731.

\bibitem{Guardsbounds}
van Benthem, J., \emph{Guards, bounds, and generalized semantics}, J. Log.
  Lang. Inf. \textbf{14} (2005), pp.~263--279.
\newline\urlprefix\url{https://doi.org/10.1007/s10849-005-5786-y}

\bibitem{CRSGuardedLogics}
Van~Benthem, J., \enquote{Crs and Guarded Logics,} Bolyai Society Mathematical
  Studies 22, Mathematical Institute, Hungarian Academy of Sciences, Budapest,
  2012 pp. 273 -- 301.

\end{thebibliography}

\end{document}